\def\unit#1#2{\hbox{#1$\,\textrm{#2}$}}
\newtheorem{thm}{\textbf{Theorem}}
\newtheorem{defn}{\textbf{Definition}}
\newtheorem{prop}{\textbf{Proposition}}
\newtheorem{prob}{\textbf{Problem}}
\theoremstyle{definition}
\newtheorem{example}{\textbf{Example}}
\newcommand{\comment}[1]{}
\newcommand{\version}{long}  
\newcommand{\longv}{\ifthenelse{\equal{\version}{long}}}
\newcommand{\figcaption}[1]{\vspace*{-3mm}\caption{#1}\vspace*{-3mm}}
\title{Trajectory Aware Macro-cell Planning \\for Mobile Users}
\author{

    \IEEEauthorblockN{Shubhadip Mitra\IEEEauthorrefmark{2}, Sayan Ranu\IEEEauthorrefmark{3}, Vinay Kolar\IEEEauthorrefmark{1}, Aditya Telang\IEEEauthorrefmark{1}, \\Arnab Bhattacharya\IEEEauthorrefmark{2}, Ravi Kokku\IEEEauthorrefmark{1}, Sriram Raghavan\IEEEauthorrefmark{1}}\\
    \IEEEauthorblockA{\IEEEauthorrefmark{1}IBM Research, India.
    \\\{vinkolar, aaditya.telang, ravkokku, sriramraghavan\}@in.ibm.com}\\
    \IEEEauthorblockA{\IEEEauthorrefmark{2}Dept. of Computer Science and
	Engineering, Indian Institute of Technology, Kanpur, India.
    \\\{smitr, arnabb\}@cse.iitk.ac.in}\\
		\IEEEauthorblockA{\IEEEauthorrefmark{3}Dept. of Computer Science and
	Engineering, Indian Institute of Technology, Madras, India.
    sayan@cse.iitm.ac.in}
}
\newcommand{\tump}{\ensuremath{\operatorname{TUMP}}\xspace}
\newcommand{\tumpo}{\ensuremath{\operatorname{TUMP}(1)}\xspace}
\newcommand{\tumpg}{\ensuremath{\operatorname{TUMP}(\gamma)}\xspace}
\newcommand{\tumpe}{\ensuremath{\operatorname{TUMP}(\epsilon)}\xspace}
\newcommand{\dmax}{\ensuremath{d}\xspace}
\newcommand{\todo}[1]{}
\newcommand{\randk}{RANDOMIZED\xspace}
\newcommand{\simg}{SIMPLE-GREEDY\xspace}
\newcommand{\incg}{INC-GREEDY\xspace}
\newcommand{\decg}{DEC-GREEDY\xspace}
\newcommand{\lpr}{LP\xspace}
\begin{document}

\maketitle

\begin{abstract} 
In this paper, we handle the problem of efficient user-mobility driven macro-cell planning in cellular networks. As cellular networks embrace heterogeneous technologies (including long range 3G/4G and short range WiFi, Femto-cells, etc.), most traffic generated by static users gets absorbed by the short-range technologies, thereby increasingly leaving mobile user traffic to macro-cells. To this end, we consider a novel approach that factors in the trajectories of mobile users as well as the impact of city geographies and their associated road networks for macro-cell planning. Given a budget $k$ of base-stations that can be upgraded, our approach selects a deployment that improves the most number of user trajectories. The generic formulation incorporates the notion of quality of service of a user trajectory as a parameter to allow different application-specific requirements, and operator choices. We show that the proposed trajectory utility maximization problem is NP-hard, and design multiple heuristics. To demonstrate their efficacy, we evaluate our algorithms with real and synthetic datasets emulating different city geographies. For instance, with an upgrade budget k of 20\%, our algorithms perform 3-8 times better in improving the user quality of service on trajectories when compared to greedy location-based base-station upgrades.
\end{abstract}

\section{Introduction}
\label{sec:Intro}

As cellular networks advance towards providing high-bandwidth services to a 
large subscriber base, the revenue growth for cellular network operators is 
significantly slowing down~\cite{booz}.
On one hand, operators are making heavy investments to upgrade the network to
cater to the growing bandwidth demands. On the other hand, the revenue per byte is 
decreasing because of increased competition and demand for cheaper services. 
Consequently, to keep network deployment costs low, cellular operators are 
increasingly focusing on short-range technologies, such as Small-cells and 
Femto-cells, for meeting the high-bandwidth demands from customers~\cite{smallcell,femtocell}. 

While short-range technologies provide high bandwidth to static users at a much 
cheaper cost per byte, long-range networks are more appropriate to provide continuous 
coverage. Hence, as short-range networks absorb static user traffic, macro-cells will 
increasingly handle mobile user traffic. As a result, it becomes important to 
consider {\em mobility patterns of users} for effective macro-cell upgrades. 
For example, users are accessing a variety of applications such as YouTube and 
maps when mobile. These applications require high bandwidth and delay guarantees to 
ensure high quality of experience (QoE). One recent survey indicates that data traffic 
increases by 20-30\% during busy commute hours~\cite{wandera}. Hence, it is imperative 
for the operators to plan macro-cell upgrades to {\em maximize the quality of experience} 
for mobile subscribers.

To the best of our knowledge, no current approach, either in research 
literature or in practice, considers user mobility trajectories and 
the experience perceived by users for macro-cell upgrades. 
Operators currently perform incremental upgrade of a few 
base-stations to newer technology while installing cheaper older technology 
base-station in areas with low demand. For example, a major 
operator, Airtel in India, deployed 6,728 new 3G sites (27\% year-on-year growth), 
while also deploying 4,977 new 2G sites in 2013-2014~\cite{airtel}. 

Operators deploy higher generations of technology based on the anticipated 
static user population. This often leads to switching between base-stations of 
multiple generations of technologies on a mobile user trajectory, thereby 
leading to degraded quality of experience for mobile users. For instance, a 
mobile 4G user on a given daily commute trajectory may often handoff from  
 a 4G cell to a 3G/2G cell, depending on the current deployment. 

In this paper, we focus on explicitly incorporating the experience 
of mobile users on their trajectories for incremental upgrade of cellular 
networks from one generation to another. Note that upgrades often happen at 
cell-towers that already have a previous generation of the technology deployed, 
mainly to keep real-estate costs for cell-sites (including land, room, grid 
connection, diesel generator, backhaul provisioning, etc.) low. Hence, we 
focus on the problem of identifying base-stations that need to change from one 
generation to another, and leave RF-level planning~\cite{Song2010,atoll} as a 
follow-up task that field engineers perform at the towers identified for upgrades.

Specifically, we tackle the following budget-constrained Trajectory Utility 
Maximization Problem (\tump): {\em Given a macro-cell 
network with $n$ base-stations and quality of experience of mobile users 
when connected to a sequence of base-stations, how do we choose $k$ 
base-stations such that the overall mobile user experience is maximized?} 

The key idea of the \tump problem is to: (1) consider the performance perceived 
by users on their trajectories using call/transaction records, 
(2) identify the base-stations that provide lower QoS on each trajectory, 
and (3) deploy higher capacity base-stations such that maximum number 
of trajectories are bottleneck-free. 


\noindent We make the following contributions:
\begin{enumerate*}
\item To the best of our knowledge, this is the first paper to propose 
cellular macro-network planning by considering users\rq{} trajectories. 
We develop an extensible framework (\tump) for optimizing mobile user experience 
for different trajectory utility functions that capture different application QoS 
requirements.

\item We show that \tump is NP-hard. We then present two heuristics:
\incg and \decg. We prove bounds on their efficacy, and show 
that they can be incrementally applied to an evolving network; i.e., as and when the operator allocates 
additional budget, these 
algorithms can be applied to incrementally evolve the network from one	generation to 
another. Our techniques enable operators to satisfy 3-8 times more number of
mobile users than an approach that uses a greedy location-based base-station 
upgrade.

\item We measure and analyze the existing experiences of mobile users during their 
everyday commute. We show that around 50\% of the base-stations during commute 
provide throughputs that cannot cater to a medium sized video. Users often 
experience long stretches of time that degrade user experience.
 
\item We develop a network trace generator from how people move in large cities. 
We believe that the trace generator is useful for research beyond 
this paper since cellular network traces are most often not published openly by the 
operators. Through these traces, we show that the investment required to provide 
satisfactory QoS to mobile users is dependent on the population distributions 
and their road-network. Specifically, cities with a dense central business 
districts, such as New York, need less budget to satisfy a large segment of
mobile users than cities where businesses are spread out (e.g., Atlanta).
\end{enumerate*}



The rest of the paper is organized as follows. We describe a measurement-based analysis 
of the problem in Section~\ref{sec:measurement}. Section~\ref{sec:formulation} 
discusses the proposed trajectory utility maximization problem (\tump). Section~\ref{sec:algo} 
analyzes the approximation algorithms and their bounds. In Section~\ref{sec:methodology} 
and Section~\ref{sec:results}, we evaluate the efficacy of the algorithms under different 
parameter settings. Section~\ref{sec:related} describes related work while 
Section~\ref{sec:concl} discusses avenues for future work and concludes.

\begin{table}[t]
\caption{Route Characteristics} 
\centering 
{\scriptsize
\begin{tabular}{| r | c | c | c | c|}
\hline
Route & Drive Time & Num Trajectories & Num BS & Approx start hour\\
\hline 
RT-1 & 26 hours & 13 & 100 & 8:00 AM\\ 
RT-2 & 25 hours & 18 & 76 & 6:15 PM\\ 
RT-3 & 13 hours & 7 & 56 & 7:45 PM\\ 
\hline
TOTAL & 64 hours & 38 & 158 &\\
\hline 
\end{tabular}} 
\label{tab:routes} 
\end{table} 

\section{A Measurement-based Motivation}
\label{sec:measurement}
We conducted measurement-based experiments to quantify user 
experiences during mobility. We developed an Android app that measures the throughput 
while users are traveling on their daily trajectories. The users have a 3G data 
connection. We demonstrate that the users often experience prolonged periods of 
non-satisfactory download rates along their trajectories. 

Our measurements are aware of the data limits imposed on the cellular plan. 
Hence, our app was tuned to: (1) sample throughput with a 
high-periodicity of approximately five minutes (only when the user is traveling); and,
(2) each \textit{sample} consists of downloading chunks of \unit{50}{kB} for 5 times 
consecutively; the size of each sample was chosen such that there are at least 30 TCP 
packets (to accommodate TCP's startup delays). For each sample, we record the 
base-station id, technology of the base-station, and GPS coordinates of the user. 

We collected over \unit{20} {days} of driving data on three main routes. 
Each route is around \unit{25} {Km}, and contains multiple trajectory samples over 
different days. Table~\ref{tab:routes} summarizes the different routes that were
taken by the users. 

\begin{figure}[t]
\begin{center}
\includegraphics[width=2.5in]{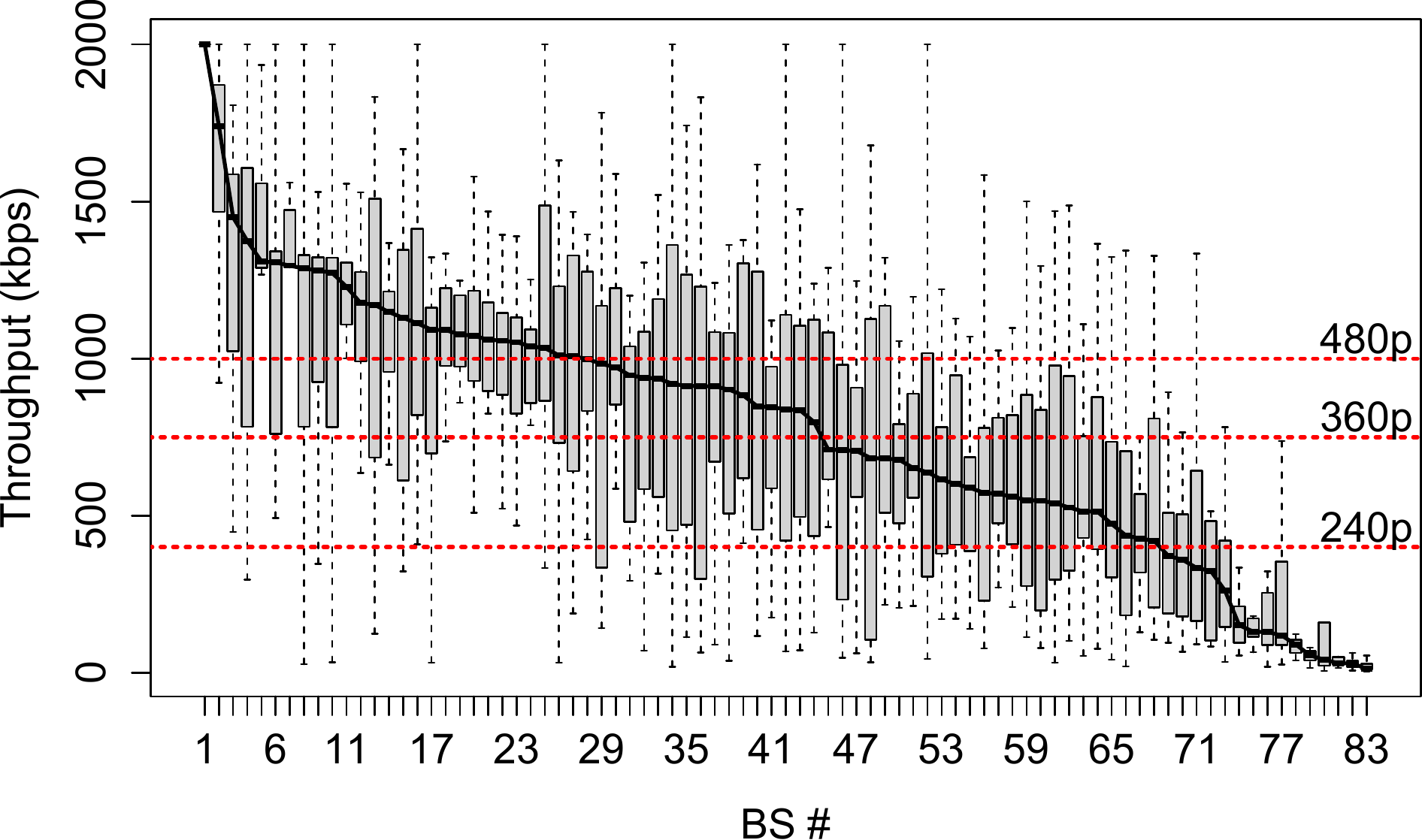}
\end{center}
\caption{Throughput distribution on base-stations}
\label{fig:meas_bs}
\end{figure}
We observed $158$ unique base-stations, out of which $83$ had 
$20$ or more samples. Fig.~\ref{fig:meas_bs} shows 
the throughput statistics on these $83$ base-stations. All the base-stations, 
except the last two, were indicated as 3G base-stations; the last two were 2G. 
While 3G currently claims to provide \unit{2}{Mbps} in the measured 
regions, we observed that most base-stations provide much lower throughput 
during every day mobility scenario. Such achievable throughput cannot cater to 
the demanding applications, such as video streaming, that mobile users often 
desire to use during everyday long commutes~\cite{wandera}.

For example, around 50\% of base-stations that we sampled cannot cater to the 
recommended bit-rate for a medium sized YouTube stream (320p video), which 
requires \unit{750}{kbps}. Recommended bit-rates for YouTube videos of 
other sizes are shown by the dotted red line in Fig.~\ref{fig:meas_bs}. 

We now analyze the variations of throughputs across base-stations on user trajectories. 
Fig.~\ref{fig:meas_time_th} demonstrates the fluctuations in throughputs as the user is traveling 
on RT-2 on various days. Each row plots one trajectory, with 
the dots representing the median throughput achieved. The radius of the 
dot is proportional to the throughput, and the dots are color-coded at 4 thresholds 
corresponding to different YouTube size videos. 
Fig.~\ref{fig:meas_time_th}
shows that mobile users often receive low capacity for extended durations of time. 
On RT-2, more than 47\% of the throughput sampled on the trajectories are below 
\unit{400 kbps}, which is the recommended bit rate for a lower quality (240p) 
YouTube video.

\begin{figure}
\begin{center}
\subfigure[Throughputs on different trajectories in RT-2
~\label{fig:meas_time_th}]{\includegraphics[width=6cm]
{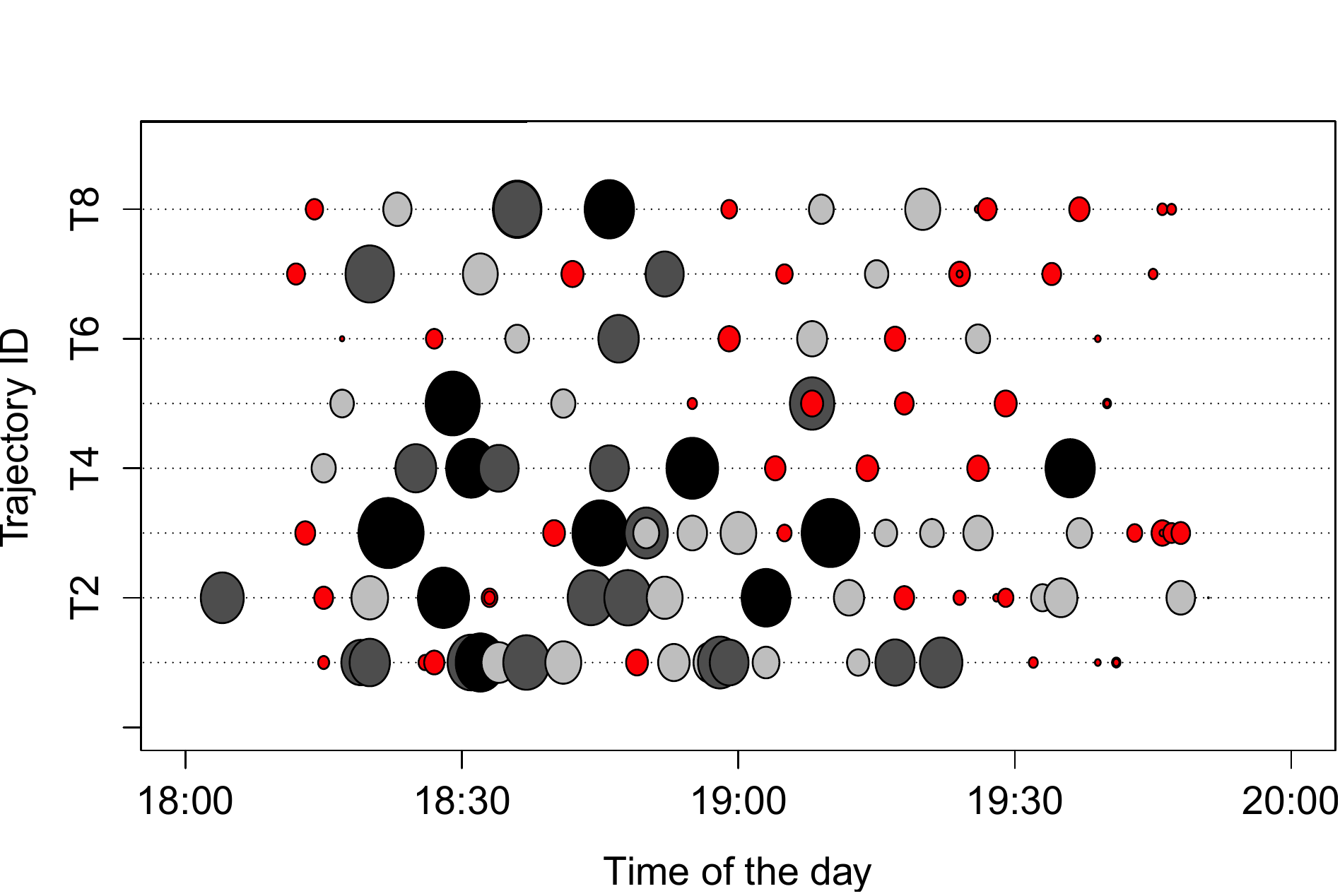}}
\subfigure[Base-station(BS) graph
~\label{fig:meas_graph}]{\includegraphics[width=\linewidth]
{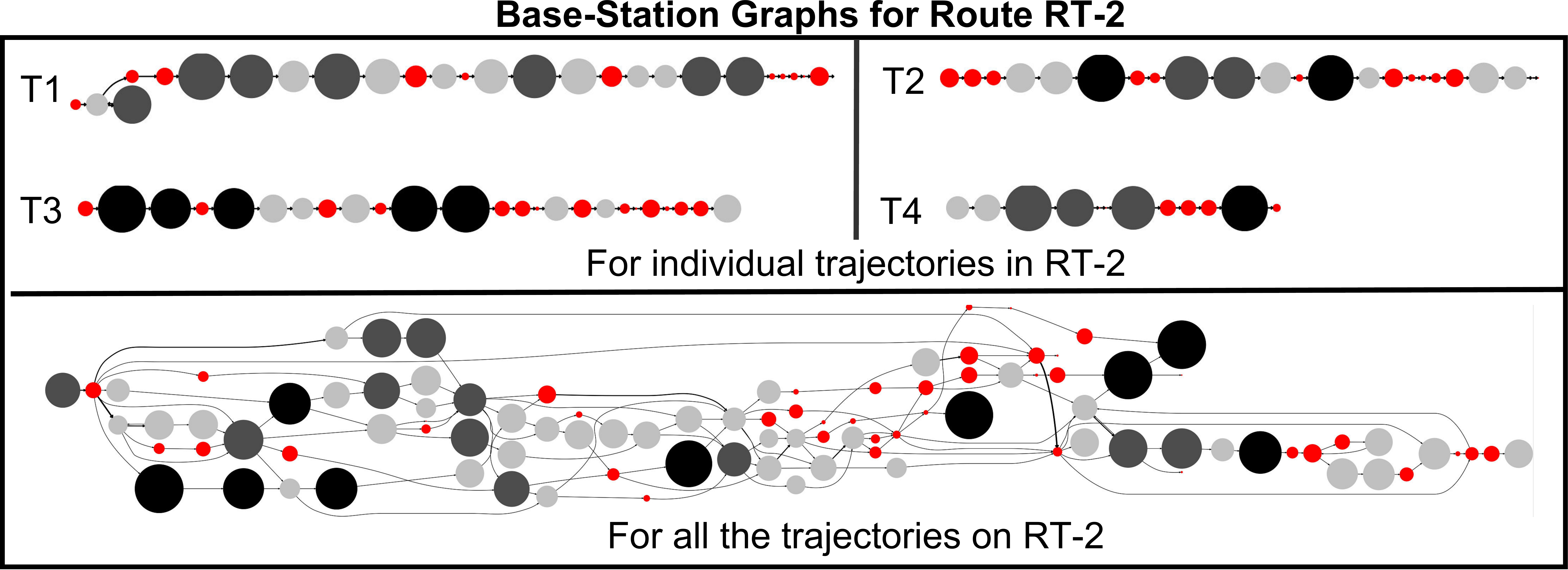}}
\end{center}
\figcaption{Mobile users constantly experience low throughput on trajectories}
\label{fig:measure}
\end{figure}
We construct a \textit{base station graph (BS-graph)} for each trajectory, and demonstrate 
the variance of throughput. Each node in a BS-graph is a base-station. Ideally, a 
directed edge is drawn between two base-stations if the user hands-off from one to the 
other. However, due to our limited sampling, an edge denotes that the user 
was connected to the two base-stations in consecutive samples. The radius of the 
node is proportional to the median throughput observed when connected to this 
base-station. 

Fig.~\ref{fig:meas_graph} shows the sequence of unique base-stations that the user 
had recorded on the trajectories. The lower half of Fig~\ref{fig:meas_graph} shows 
the BS-graph for all trajectories on RT-2, and the upper part of the figure separates
the BS-graph for individual trajectories. We use this notion of a chain of 
base-stations, with the user having some experience on each base-station, to 
theoretically abstract the mobile user experience on a trajectory. 
Fig.~\ref{fig:meas_time_th} and Fig.~\ref{fig:meas_graph} show that the user 
has {\em long stretches of time (and sequence base-stations) that degrade 
user experience}.


\section{Trajectory Utility Maximization Problem}
\label{sec:formulation}

\comment{

\begin{figure}[t]
\begin{center}
\includegraphics[width=\linewidth]{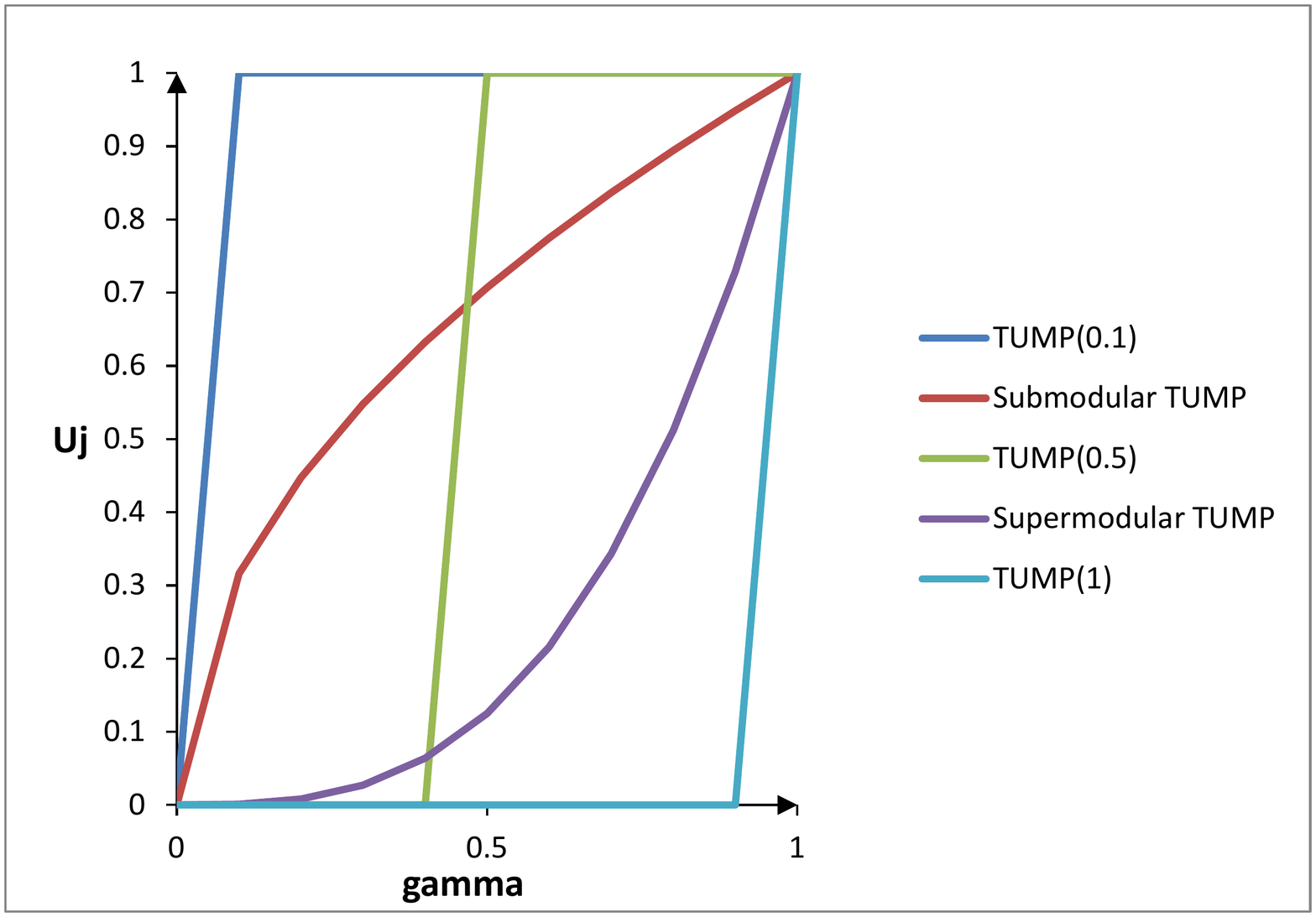}
\end{center}
\caption{Examples of variants of $TUMP$}
\label{fig:tump_variants}
\end{figure}

}

In this section, we design a generic formulation that, given a budget, plans
base-station upgrades that maximizes the mobile users\rq{} experience. Our model
is \emph{practical} and \emph{useful} for today's cellular operators as it
utilizes the already existing data and enables the operator to control key
parameters for base-station upgrades. Specifically, we allow the operator to:
(1) optimize based on a given budget; (2) define the strictness of when a
subscriber's mobile experience is considered to be poor; (3) utilize active data
already stored by many operators, such as call records~\cite{tdrs} and
deep-packet inspection logs~\cite{dpi}, to quantify user experience on a
trajectory. 


Consider the network $\mathcal{B}$ = \{$B_{1}$, \dots, $B_{n}$\} of $n$
base-stations spread across a region. 
A \emph{trajectory} $T_j$ is represented as a sequence of tuples of the form 
$\Phi_i = \langle B_{i}, \Delta_{i}, \eta_{i}\rangle$ that
captures 
the user experience. The user on this trajectory was connected to the 
base-station $B_{i} \in \mathcal{B}$ for a time interval of $\Delta_{i}$ units 
and received a throughput of $\eta_{i}$ bytes per unit of time. Note that 
$\eta_{i}$ can be any metric as long as a greater $\eta_{i}$ denotes better 
experience (e.g., throughput or packet success rate) when associated to a 
respective base-station. Henceforth, for brevity, we refer to this metric
as throughput. As we show in Section~\ref{sec:methodology}, the trajectories and
$\Phi_i$'s
can be constructed by scanning the active transaction records maintained by 
the operator.

For ease of notation, we write $B_i \in T_j$ if the  base-station
$B_i \in \mathcal{B}$ lies on the trajectory $T_j$. The length of a trajectory
$T_j$, denoted by
$|T_j|$, is simply the count of base-stations that lie on it. 
Suppose $\dmax$ denotes the maximum length of a trajectory.
\comment{
The maximum  length of any trajectory in $\mathcal{T}$ is denoted 
by $\dmax$.
}

For a trajectory $T_j$, a base-station $B_i \in T_j$ is a \emph{bottleneck
base-station} if it offers a degraded quality of service, e.g., an extremely low
upload/download speed, a call-drop, etc.  In our model, we assume that a
base-station $B_{i}$ acts as a bottleneck w.r.t. a trajectory $T_j$ if the
corresponding throughput is less than a threshold, i.e., $\eta_{i} < \tau$.  The
value of $\tau$ is computed from a
combination of network parameters.  A base-station that
is a bottleneck for one trajectory may \emph{not} be a bottleneck for other
trajectories since different users may experience different throughputs based on
various factors such as data plan, time of the day, etc.   
\comment{  
A trajectory $T_j$ is a \emph{bottleneck-free trajectory} if
there are no bottleneck base-stations in it.

The general problem that we address in the paper is as follows.
\begin{prob}
	Given a base-station network $\mathcal{B}$ of size $n$, a budget parameter
	$k$, and a set of $m$ trajectories $\mathcal{T} = \{T_1, \dots, T_m\}$,
	determine the set of $k$ base-stations to upgrade such that the number of
	trajectories that become bottleneck-free is maximized.
\end{prob}
\subsection{Trajectory Utility Maximization Problem}

We associate a utility function to each trajectory in terms of the total time covered by base-stations that are bottleneck-free, and then adopt the problem statement to
maximizing the sum of utilities of the trajectories.
}

Our goal is to maximally improve the mobile user experience by selectively
upgrading $k$ out of $n$ base-stations that act as bottlenecks on some
trajectories. 
Suppose $\mathbf{X}=\{x_1, \dots, x_n\}$ denotes the boolean solution vector such that $x_i
= 1$ if and only if base-station $B_i$ is chosen for upgradation and $0$
otherwise.

Our proposed framework allows the network operator to specify \emph{any trajectory utility function} $W_j:(T_j,\mathbf{X}) \rightarrow [0,1]$, defined over each trajectory $T_j$ and solution $\mathbf{X}$, that  captures the impact of base-station upgrades on the trajectory $T_j$.
 We assume that $W_j$ increases as more number of bottleneck base-stations on
 the trajectory $T_j$ get upgraded. In other words, the trajectory utility
 increases with its quality of experience (QoE).
 \comment{ 
 a trajectory experiencing satisfactory QoE would have high utility; and that witnessing unsatisfactory QoE would have low utility.
 }
  Our aim is to maximize the number of trajectories with high utility.
  
To do so, given any trajectory utility function $W_j$,  we map it to a step
utility function $U_j$ using a threshold $\gamma \ (0 \le \gamma \le1)$,
henceforth referred to as the \emph{bottleneck parameter}:
\begin{align}
	U_j =
			\begin{cases}
				1 & \text{ if } W_j \ge \gamma  \\
				0 & \text{ otherwise }
			\end{cases}
\label{eq:utility-def}	
\end{align}
We now formally state the Trajectory Utility Maximization Problem, \tumpg.

\begin{prob}[\tumpg]
	Given a base-station network $\mathcal{B}$ of size $n$, a budget parameter
	$k$, a bottleneck parameter $\gamma$, and a set of $m$ trajectories $\mathcal{T} = \{T_1, \dots, T_m\}$, each of which has an associated utility function $W_j$,
	determine the set of $k$ base-stations to upgrade such that the sum of
	utilities $\sum_{T_j \in \mathcal{T}} U_j$ is maximized, where $U_j$ is
	given by Eq.~\eqref{eq:utility-def}.
\end{prob}

Intuitively, a solution to a given instance of \tumpg with a high value of
$\gamma$ would benefit lesser number of \emph{spatially distinct}\footnote{Two
or more trajectories are \emph{spatially distinct} if their corresponding set of
base-stations are ``largely'' different.} trajectories, as it attempts to
utilize the available resources (i.e., upgraded base-stations) to optimize the
QoE on these trajectories. On the other hand, a solution to the same instance of
\tumpg with a lower value of $\gamma$ would attempt to be more fair in
distribution of the available resources across a larger set of spatially
distinct trajectories at the cost of allowing limited improvement in QoE. The
operator can judiciously tune $\gamma$ based on the budget and desired
subscriber experience.

\subsection{Solution Overview}
\label{sec:sol}

\begin{figure}[t]
\begin{center}
\includegraphics[width=0.75\linewidth]{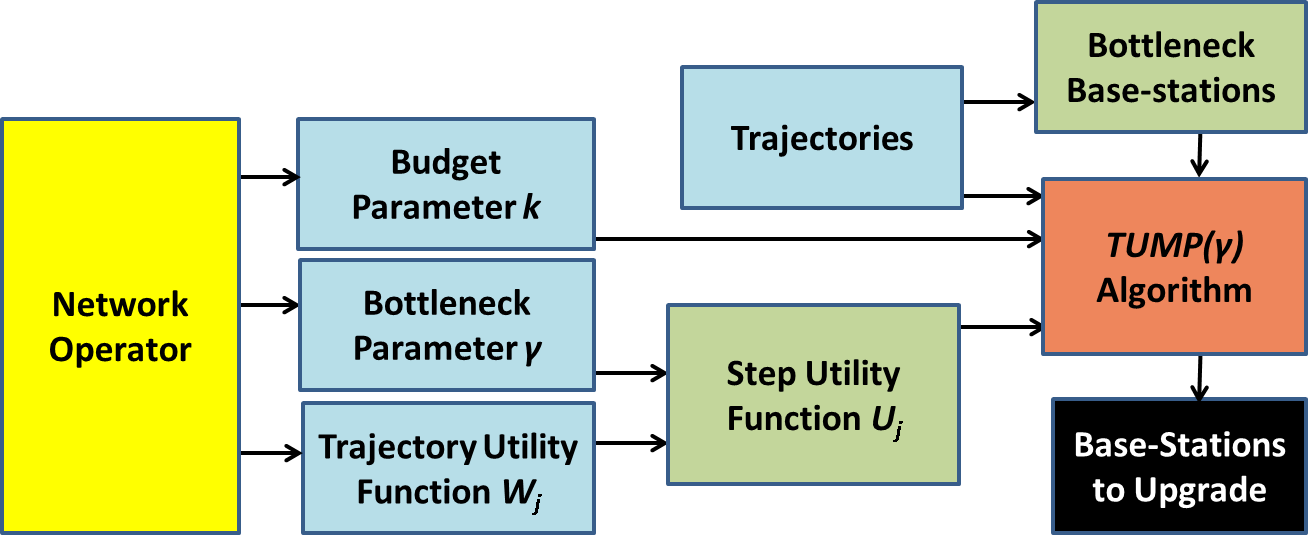}
\end{center}
\figcaption{Solution Overview}
\label{fig:sol}
\end{figure}

Fig.~\ref{fig:sol} outlines the basic steps involved in our solution framework.
We construct the user trajectories from call records that the operator has
already stored~\cite{tdrs,dpi}. Given a set of trajectories, we first identify
the bottleneck base-stations, based on the throughputs received.
A base-station is a candidate for upgrade if it is a bottleneck for any
trajectory. The network operator provides a
trajectory utility function $W_j$ associated with each trajectory $T_j$, and the
bottleneck parameter $\gamma$. To maximize the number of satisfied trajectories, 
we map the utility function $W_j$ to a step utility function $U_j$, using the
bottleneck parameter $\gamma$. Finally, we apply any of the proposed algorithms for
\tumpg (described in Section~\ref{sec:algo}), and report the $k$ base-stations to be upgraded.

\subsection{Bottleneck Utility Function}

Though the proposed framework allows any trajectory utility function, for the purpose of analysis and evaluation of our approach, this section introduces a special trajectory utility function, namely 
 \emph{bottleneck utility function}.

Given a trajectory $T_j$, we define the weight $w_{ji}$ for each base-station $B_{i} \in
T_j$, that accounts for the fraction of the total time that the user (on this trajectory)
was connected to the base-station $B_{i}$. More precisely, 
$w_{ji} =\frac{\Delta_{i}}{\sum_{B_{i}\in T_j} \Delta_{i}}$.
Suppose $b_{ji}$ denotes a bottleneck indicator variable that takes value $1$ if
the base-station $B_i \in T_j$ is a bottleneck base-station w.r.t. the
trajectory $T_j$, and $0$ otherwise.
%

Given a trajectory $T_j$ and solution $\mathbf{X}$, we define the \emph{bottleneck utility} function $W_j$ as follows:
\begin{align}
\label{eq:traj_weight}
	W_j = \sum_{B_i \in T_j, b_{ji} = 0} w_{ji}
		+ \sum_{B_i \in T_j, b_{ji} = 1} w_{ji} . x_i
\end{align}
 $W_j$ essentially captures the fraction of the total time when the
user enjoys acceptable QoE on the trajectory $T_j$.  If all the
base-stations on $T_j$ are non-bottleneck,
then $W_j = 1$; otherwise, $W_j < 1$.  Henceforth, we consider the bottleneck utility function as the default trajectory utility function.

Based on this, we next define a class of trajectories that enjoy satisfactory QoE after upgradation of the base-stations.

\begin{defn}[$\gamma$-bottleneck-free trajectory]  
	A trajectory $T_j \in \mathcal{T}$ is \emph{$\gamma$-bottleneck-free} if its
	utility $W_j \ge \gamma$ where $W_j$ is given by Eq.~\eqref{eq:traj_weight}, and 
	$\gamma \in [0,1]$ is  the
	bottleneck parameter.
\end{defn}

Of particular analytical interest is the problem instance $\tump(\gamma = 1)$, denoted henceforth by simply $\tumpo$. In this case, $\forall j=1,\dots,m,\ U_j
		=W_j= 1$ if and only if all the bottleneck base-stations on the trajectory
		$T_j$ are upgraded.  This problem instance is interesting because it is the worst case instance of the \tumpg problem that
		aims to maximize the number of trajectories that are \emph{completely} 	bottleneck-free. 

This framework allows the network operator to suitably select the
bottleneck parameter $\gamma$  based on
the application requirements. For example, $\gamma = 1$ is suitable
for real-time applications such as voice calls or video conferences whereas
$\gamma = 0.8$ may suffice for video streaming since video players can mask-off
certain durations of low connectivity by buffering.  Similarly, even $\gamma =
0.5$ may be enough for elastic applications such as background synchronization
of emails.

\longv
{
\subsection{Hardness of \tumpg}
}
{}
\comment{
We next show that $\tumpo$ problem is NP-hard due to a reduction from the $k$-dense
subgraph ($k$-DS) problem \cite{Bourgeois} stated below.
\begin{prob}[$k$-DS]
	Given an undirected graph $G(V,E)$ where $V$ is the set of $n$ nodes and $E$
	is the set of $m$ edges, determine a subgraph with $k$ nodes that maximizes
	the number of edges induced in the subgraph.
\end{prob} 

The $k$-DS problem is NP-hard because the clique problem is directly reducible
to it \cite{Bourgeois}.

\begin{thm}
	The \tumpo problem is NP-hard.
\end{thm}

\begin{proof}
	Given an instance of the $k$-DS problem, $G(V,E)$, we reduce it to an
	instance of \tumpo as follows. For each node $v_i\in V$, we create a
	base-station $B_i \in \mathcal{B}$. For each edge $(v_i,v_j) \in E$, we
	create a trajectory of size $2$, $\{ \langle B_i,\Delta_0,\eta_0 \rangle,
	\langle B_j,\Delta_0,\eta_0 \rangle \}$, where $\Delta_0$ and $\eta_0$ are
	arbitrary values for time interval and throughput respectively. Also,
	$\forall T_j, \forall B_i \in T_j, b_{ji}=1$, i.e., each of the
	base-stations incident on a trajectory acts as a bottleneck. 

	If $H$, a subgraph of $G$, is a solution to the $k$-DS problem, then the set
	of base-stations $B_H = \{B_i|v_i \in H\}$ is a solution to the \tumpo
	problem. If the edge $(v_i,v_j)\in H$, then both the base-stations $B_i,B_j
	\in B_H$, and, thus the trajectory $\{ \langle B_i,\Delta_0,\eta_0 \rangle,$
	$\langle B_j,\Delta_0,\eta_0 \rangle \}$ becomes bottleneck-free, i.e., its
	utility becomes $1$.  Since the subgraph $H$ maximizes the number of edges
	induced on it, the solution $B_H$ maximizes the number of bottleneck-free
	trajectories or, in other words, maximizes the sum of trajectory utilities.

	Similarly, we can argue that if $B$ is a solution to the \tumpo problem, then $H_B=\{v_i|B_i
	\in B\}$ is a solution to the $k$-DS problem. 
\comment{	
	Since $\gamma=1$, the utility
	of a trajectory is maximized if and only if the trajectory becomes
	bottleneck-free, i.e., both the base-stations $B_i,B_j$ on the trajectory
	are in the solution $B$. This in turn implies that the edge $(v_i,v_j) \in
	H_B$. Since $B$ maximizes the number of bottleneck-free trajectories, the
	subgraph $H_B$ maximizes the number of edges induced on it.
}	
	Since the reduction requires time which is polynomial with the size of the
	input, the theorem holds.
	\hfill{}
\end{proof}
} 
\longv
{
Next, we show  that \tumpg is  NP-hard due to a reduction from the $k$-Vertex
Cover ($k$-VC) problem \cite{Bourgeois}.
\begin{prob}[$k$-VC]
	Given an undirected graph $G(V,E)$ where $V$ is the set of $n$ nodes and $E$
	is the set of $m$ edges, determine a set $S \subseteq V$ of $k$ nodes that
	maximizes the number of edges covered by the nodes in $S$, i.e., incident on
	at least one of the nodes in $S$.
\end{prob} 

Being a generalization of the vertex cover problem, the $k$-VC problem is
NP-hard \cite{Bourgeois}. 
}
{}
\begin{thm}
	The \tumpg problem is NP-hard.
\end{thm}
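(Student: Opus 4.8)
The plan is to reduce from the $k$-VC problem, which is stated (and noted to be NP-hard) just above. The essential observation is that in $k$-VC an edge is \emph{covered} as soon as \emph{at least one} of its two endpoints is selected, whereas in the $\gamma = 1$ case a length-$2$ trajectory becomes satisfied only when \emph{both} of its base-stations are upgraded. The whole reduction therefore hinges on choosing the time intervals (hence the weights $w_{ji}$) so that, for the prescribed bottleneck parameter $\gamma$, upgrading a \emph{single} bottleneck base-station already pushes the bottleneck utility $W_j$ of its trajectory above $\gamma$.

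Concretely, given an instance $(G(V,E),k)$ of $k$-VC, I would build a \tumpg instance as follows. For each vertex $v_i \in V$ create a base-station $B_i$, and keep the same budget $k$. For each edge $(v_i,v_j) \in E$ create one length-$2$ trajectory $\{\langle B_i,\Delta_0,\eta_0\rangle,\langle B_j,\Delta_0,\eta_0\rangle\}$ in which both base-stations are bottlenecks ($b_{ji}=1$) and $\eta_0 < \tau$. With equal time intervals the two weights are $w = 1/2$ each, so by Eq.~\eqref{eq:traj_weight} the utility is $W_j = \tfrac12(x_i+x_j) \in \{0,\tfrac12,1\}$. Fixing $\gamma = 1/2$ then gives $U_j = 1$ exactly when at least one of $x_i,x_j$ equals $1$, i.e.\ when the corresponding edge is covered. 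For a general target $\gamma \in (0,1)$ one can instead pad each trajectory with a non-bottleneck segment of weight $c$ and equal bottleneck weights $a=b$ chosen so that $c < \gamma \le c+a$; such $a,c$ exist whenever $\gamma < 1$, so the gadget still encodes the ``at least one'' semantics, while the corner case $\gamma = 1$ is the \tumpo instance handled separately via $k$-DS.

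Given the gadget, the correctness argument is a routine bijection. A set $S \subseteq V$ with $|S| = k$ corresponds to the upgrade vector $\mathbf{X}$ with $x_i = 1 \iff v_i \in S$, and under this correspondence $\sum_{T_j \in \mathcal{T}} U_j$ equals exactly the number of edges of $G$ covered by $S$. Hence a maximizer of the total trajectory utility is a maximizer of the number of covered edges and conversely, so an optimal \tumpg solution yields an optimal $k$-VC solution. Since the construction produces $|V|$ base-stations and $|E|$ trajectories and runs in time polynomial in the size of $G$, the reduction is polynomial and the theorem follows.

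I expect the main obstacle to be precisely the weight calibration, not the bijection: the reduction must guarantee that a single upgrade --- and not a forced pair of upgrades --- suffices to satisfy a trajectory, which is exactly what distinguishes this ``OR'' gadget (vertex cover) from the ``AND'' gadget used for \tumpo (densest subgraph). Verifying that admissible weights $a,c$ exist for every $\gamma < 1$, and checking the boundary behaviour as $\gamma \to 1$, is the one place where a little care is needed.
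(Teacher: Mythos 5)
Your reduction is essentially identical to the paper's own proof: both reduce from $k$-Vertex Cover by creating one base-station per vertex and one length-$2$ all-bottleneck trajectory per edge with equal weights $w=1/2$, and both fix $\gamma=1/2$ so that a trajectory is satisfied exactly when at least one endpoint is upgraded. Your additional calibration for arbitrary $\gamma\in(0,1)$ and the $\gamma=1$ case via $k$-DS is a correct strengthening but goes beyond what the paper establishes (the paper only needs hardness for a single value of $\gamma$ since $\gamma$ is part of the input).
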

\longv
{

\begin{proof}
	Given an instance of the $k$-VC problem, $G(V,E)$, we reduce it to an
	instance of \tumpg as follows. For each node $v_i \in V$, we create a
	base-station $B_i \in \mathcal{B}$. For each edge $(v_i,v_j) \in E$, we
	create a trajectory of size $2$, $\{ \langle B_i,\Delta_0,\eta_0 \rangle,
	\langle B_j,\Delta_0,\eta_0 \rangle \}$, where $\Delta_0$ and $\eta_0$ are
	arbitrary constant values for time interval and throughput respectively. Also,
	$\forall T_j, \forall B_i \in T_j, b_{ji}=1$, i.e., each of the
	base-stations incident on a trajectory acts as a bottleneck.

	If a subset $S \subseteq V$ is a solution to the $k$-VC problem, then the
	set of base-stations $B_S = \{B_i|v_i \in S\}$ is a solution to the \tumpg
	problem with $\gamma=0.5$. If the edge $(v_i,v_j)\in E$ is covered by the set $S$, then at
	least one of the two base-stations, $B_i, B_j \in B_S$ and, thus, the trajectory
	$\{ \langle B_i,\Delta_0,\eta_0 \rangle,$ $\langle B_j,\Delta_0,\eta_0
	\rangle \}$ becomes $0.5$-bottleneck-free, i.e., its utility becomes $1$.  Since
	the subset $S$ maximizes the number of edges covered by it, the solution
	$B_S$ maximizes the number of $0.5$-bottleneck-free trajectories or, in other
	words, maximizes the sum of trajectory utilities.

	Similarly, we  argue that if $B$ is a solution to the \tumpg problem with $\gamma=0.5$, then $S_B=\{v_i|B_i
	\in B\}$ is a solution to the $k$-VC problem. 	
	Since $\gamma=0.5$, the
	utility of a trajectory is maximized if and only if the trajectory becomes
	$0.5$-bottleneck-free, i.e., at least one of the two base-stations,
	$B_i,B_j$ on the trajectory, is in the solution $B$. This in turn implies
	that the edge $(v_i,v_j) \in E$ is covered by at least one of the nodes
	$v_i,v_j \in S$. Since $B$ maximizes the number of
	$0.5$-bottleneck-free trajectories, the subset $S_B$ maximizes the
	number of edges covered by it.
	
	Since the reduction requires space and time which is polynomial in the size of the
	input, the proof follows.
	\hfill{}
\end{proof}
}
{
  \textsc{Proof: } The proof entails a polynomial time reduction from the $k$-Vertex Cover problem \cite{Bourgeois}. The details are given in \cite{FullTUMP}.
}

\comment{
Since \tumpo and \tumpe are NP-hard, the generalized problem \tumpg is also
NP-hard.
}

\begin{figure}[t]
\begin{center}
\includegraphics[width=0.85\linewidth]{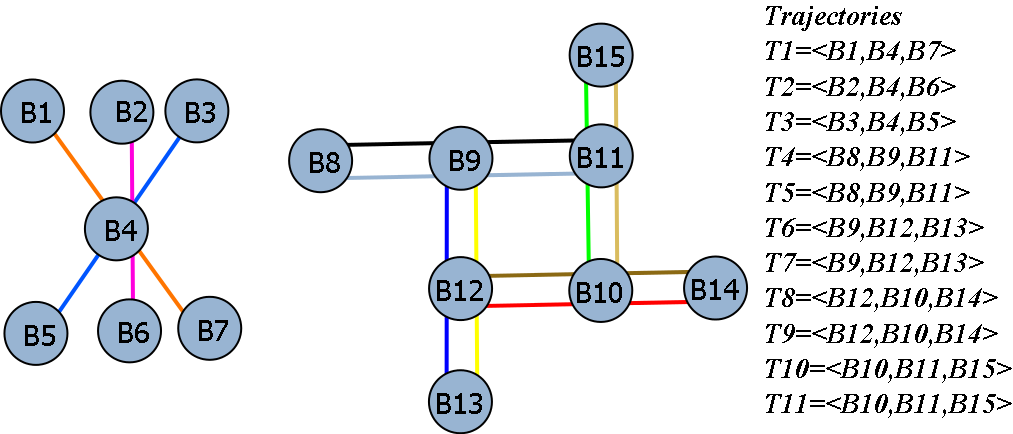}
\end{center}
\figcaption{Trajectories in Example~\ref{ex:one}.}
\label{fig:sampleEG}
\end{figure}



\section{Algorithms for $TUMP(\gamma)$}
\label{sec:algo}

\longv
{
This section describes the algorithms for the  \tumpg problem.
Firstly, we propose an integer programming based optimal algorithm, namely $IP_{\tumpg}$  and an approximation scheme based on LP relaxation of $IP_{\tumpg}$. Owing to their high running times, both of these algorithms are however, impractical for any reasonably sized dataset. The problem being NP-hard, we next present few approximation algorithms based on greedy paradigm, that not only offer faster running times, but also perform well in practice.

We begin by illustrating a simple example of \tumpg problem, that will be shortly evaluated by different algorithms.
}
{  
 This section describes the algorithms for the  \tumpg problem using the following illustrative example.
}

\comment{
Since \tumpg problem is NP-hard, in this section, we propose a few approximation algorithms to solve it. Before  proceeding further, we
state a working example, that will be evaluated by the proposed algorithms.
}
\comment{
Firstly, we propose an integer programming based optimal algorithm, namely $IP_{\tumpg}$  and an approximation scheme based on LP relaxation of $IP_{\tumpg}$. Owing to their high running times, both of these algorithms are however, impractical for any reasonably sized dataset. The problem being NP-hard, we next present three approximation algorithms based on greedy paradigm, that not only offer faster running times, but also perform well in practice.
}

\begin{example} \label{ex:one}	
	Fig.~\ref{fig:sampleEG} shows $11$ trajectories, $T_1,\cdots,T_{11}$, each of length $3$, passing through a
	set of $15$ base-stations $B_1,\cdots,B_{15}$.  We assume that each base-station is a bottleneck w.r.t. each
	of the trajectories incident on it. In addition, for ease of analysis, we assume that for a given
	trajectory, the $\Delta,\eta$ values are same for all the base-stations
	incident on it.
	\longv{	
	 For this reason, we did not list their
	values in Fig.~\ref{fig:sampleEG}.  This assumption eliminates the influence
	of $w_{ji}$'s on the solution. 
	}
	Thus, for any trajectory $T_j$ and a base-station $B_i \in T_j$, $w_{ji}=1/3$.
	\longv
	{	
	We next observe that while each of the
	base-stations $B_9,B_{10},B_{11}$ and $B_{12}$ have a maximum of $4$ incident trajectories,
	the base-stations $B_4$  has $3$ incident trajectories, the
	base-stations $B_8$, $B_{13},B_{14}$ and $B_{15}$ have $2$ incident trajectories and the rest
	have only $1$ trajectory passing through them.  	
	}			
	We set $k=3$ base-stations
	to upgrade as the budget parameter.
\longv
{	
	We evaluate this example for three different
	values of $\gamma$, as shown in Table~\ref{tab:utilities}. Following the above assumptions, we find that when $\gamma=0.33$ (resp. $\gamma=0.5$ and $\gamma=1$), it implies that at least one (resp. two and three) of the three base-stations on the trajectory must be upgraded, in order to make it $\gamma$-bottleneck-free. 
}
{
	We evaluate this example for two different
	values of $\gamma$, as shown in Table~\ref{tab:utilities}.
	When $\gamma=0.33$ (respectively,
	$\gamma=1$), it implies that at least $1$ (respectively, $3$) of the
	base-stations on the trajectory must be upgraded in order to make it
	$\gamma$-bottleneck-free. 
}		
The \emph{optimal} solutions for each of these cases are shown in the table.
\qed
\end{example}
\comment{
\begin{table}[t]\scriptsize
	\begin{center}
	\caption{Example of Trajectories, shown in Fig.~\ref{fig:sampleEG}.}
		\begin{tabular}{|c|c||c|c||c|c|}
			\hline
Traj. & Base-Stations & Traj. & Base-Stations & Traj. & Base-Stations \\
\hline
 $T_1$	& $B_1,B_4,B_7$	& $T_2$ & $B_2,B_4,B_6$ & $T_3$ & $B_3,B_4,B_5$ 	\\
 \hline
$T_4$ & $B_8,B_9,B_{11}$ & $T_5$ & $B_8,B_9,B_{11}$  & $T_6$ & $B_9,B_{12},B_{13}$\\   
 \hline
$T_7$ & $B_9,B_{12},B_{13}$& $T_8$ & $B_{12},B_{10},B_{14}$  & $T_9$ & $B_{12},B_{10},B_{14}$ \\
\hline 
$T_{10}$ & $B_{10},B_{11},B_{15}$ & $T_{11}$ & $B_{10},B_{11},B_{15}$  & &\\
\hline			
		\end{tabular}
		
		\label{tab:example}
\end{center}
\end{table}
}

We pose the \tumpg problem in
a graph setting.  Each instance of the \tumpg problem is associated with a
\emph{hyper-graph} $H = (V, E)$ where $V = \{v_1, \dots, v_n\}$ is the set of $n$ nodes
corresponding to the set of base-stations, $\mathcal{B}$, and $E = \{e_1, \dots,
e_m\}$ is the set of $m$ hyper-edges corresponding to the set of trajectories,
$\mathcal{T}$.  A node $v_i$ represents a base-station $B_i$ and a hyper-edge
$e_j$ represents the set of base-stations that the trajectory $T_j$ passes
through, i.e., $e_j = \{v_i | B_i \in T_j\}$. The degree of a node is the number of hyper-edges incident on it.

Given a set of nodes $S \subseteq
V$, its weight $w(S)$ denotes the number of hyper-edges, $e_j$, incident on at
least one of the nodes in $S$ such that $T_j$ is $\gamma$-bottleneck-free with
respect to the nodes in $S$.
For $\gamma=1$, $w(S)$ denotes the number of hyper-edges \emph{induced} on the
sub-hyper-graph formed by $S$, i.e., all the nodes of the hyper-edge
are contained within $S$. 
 
\comment{
For $\gamma=\epsilon$, $w(S)$ denotes the number of
hyper-edges \emph{covered} by the set $S$, i.e., incident on at least one of the
nodes in $S$. 
}
 Referring to this hyper-graph model, henceforth, we shall use the
terms node and base-station (and respectively, hyper-edge and trajectory)
interchangeably.  The example in Fig.~\ref{fig:sampleEG}  shows this hyper-graph setting.

\longv
{
As a pre-processing step to all our algorithms, we discard the trajectories that
cannot be made $\gamma$-bottleneck-free by upgrading any set of $k$
base-stations.  If the sum of weights of the $k$ bottleneck base-stations with
the largest weights is less than $\gamma$, the trajectory can be deemed as
infeasible and can be pruned.
}
{
The \tumpg problem can be solved \emph{optimally} using an integer linear
program (ILP), detailed  in \cite{FullTUMP}. However, since the problem is
NP-hard, its exponential running time is impractical, given the large number of
trajectories and base-stations that a typical cellular operator has to consider.
Further, we consider a Linear Programming (LP) relaxation based
approach~\cite{lprelax}, which is a common technique to approximately solve
ILPs. However, as we show in Section~\ref{sec:optComp}, the running time is
still impractical. Thus, we design greedy approximation algorithms that
perform well in practice.
}

\longv{
\subsection{Optimal Algorithm}
The \emph{optimal} solution to the \tumpg problem is represented by the following integer programming formulation, denoted by $IP_{\tumpg}$:
\begin{align}
	\label{eq:ip}
	\max w_{\gamma} &= \sum_{j=1}^m U_j \nonumber \\
	\text{s.t. } \sum_{i=1}^n x_i &\leq k, \nonumber \\
	\forall i = 1, \dots, n, \ &\forall j = 1, \dots, m, \quad
	x_i, U_j \in \{0,1\}, \nonumber \\
	\text{and } &\forall j = 1, \dots, m, \quad
	U_j \leq \frac{W_j}{\gamma} \nonumber  
\end{align}

	where  $W_j$  is given by Eq.~\ref{eq:traj_weight}, assuming it to be the bottleneck utility function. The solution $\{\tilde{x}, \tilde{U}\}$ of the above IP formulation yields the
\emph{optimal} weight $\hat{w}_{\gamma}$ (i.e., the number of
$\gamma$-bottleneck-free trajectories) to the \tumpg problem. Since the utility of each trajectory
is at most $1$, necessarily, $\tilde{w}_{\gamma} \leq m$, where $m$ is the total
number of trajectories. 
\comment{
Considering Example~\ref{ex:one}, the optimal selection of base-stations and corresponding utilities for different values of $\gamma$, are listed in Table~\ref{tab:utilities}. 
 }
Since obtaining the optimal solution requires time that
is exponential in the input size, it is
impractical.
Approximation algorithms are, thus, required to solve the problem in practical
running times. 
}

\comment{
 If any algorithm $A$ solving \tumpg \emph{always} returns a
feasible solution with weight $w_{\gamma}^A \geq r . \tilde{w}_{\gamma}$ (for
some fixed $r \in [0,1]$), then $r$ is said to be the \emph{approximation bound}
for the algorithm $A$.
Moreover, if an algorithm has an approximation bound of $r$ for the $\tumpo$
problem, then $r$ acts as an approximation bound for any $\tump(\gamma < 1)$ as
well because \tumpo is the worst case instance for \tumpg class of problems.

}
\longv
{
\subsection{Approximation Algorithms}

In the following sections, we present few approximation algorithms for the \tumpg  problem.  
}
{}

\longv
{
\begin{table*}[t]\scriptsize
	\begin{center}
	\caption{Utilities derived from different algorithms for Example~\ref{ex:one} with $k = 3$.}
		\label{tab:utilities}
		\begin{tabular}{|c||c|c||c|c||c|c|}
			\hline
			Algorithms & \multicolumn{2}{|c|}{$\gamma = 0.33$}& \multicolumn{2}{|c|}{$\gamma = 0.5$}& \multicolumn{2}{|c|}{$\gamma = 1$} \\
\hline			
	 	& Upgrades & Utility & Upgrades & Utility & Upgrades & Utility \\	
			\hline
			\hline
			Optimal & $B_4,B_{11},B_{12}$ & $11$ & $B_{10},B_{11},B_{12}$ & $4$ & $B_{10},B_{12},B_{14}$ & $2$ \\
			\hline
			\simg   & $B_{10},B_{11},B_{12}$ & $8$ & $B_{10},B_{11},B_{12}$ & $4$ & $B_{10},B_{11},B_{12}$ & $0$  \\
			\hline
			\incg   &  $B_{4},B_{11},B_{12}$& $11$ &  $B_{10},B_{11},B_{12}$ & $4$ & $B_{10},B_{11},B_{12}$ & $0$ \\
			\hline
			\decg   & $B_{4},B_{11},B_{12}$ & $11$ & $B_{10},B_{11},B_{12}$& $4$ & $B_{10},B_{12},B_{14}$& $2$ \\
			\hline
		\end{tabular}	
	\end{center}
\end{table*}
}
{
 \begin{table}[t]\scriptsize
	\begin{center}
	\caption{Utilities derived from different algorithms for Example~\ref{ex:one} with $k = 3$.}
		\label{tab:utilities}
		\begin{tabular}{|c||c|c||c|c|}
			\hline
			Algorithms & \multicolumn{2}{|c|}{$\gamma = 0.33$}&  \multicolumn{2}{|c|}{$\gamma = 1$} \\
\hline			
	 	& Upgrades & Utility & Upgrades & Utility \\	
			\hline
			\hline
			Optimal & $B_4,B_{11},B_{12}$ & $11$ &  $B_{10},B_{12},B_{14}$ & $2$ \\
			\hline
			\simg   & $B_{10},B_{11},B_{12}$ & $8$ &  $B_{10},B_{11},B_{12}$ & $0$  \\
			\hline
			\incg   &  $B_{4},B_{11},B_{12}$& $11$ &   $B_{10},B_{11},B_{12}$ & $0$ \\
			\hline
			\decg   & $B_{4},B_{11},B_{12}$ & $11$ &  $B_{10},B_{12},B_{14}$& $2$ \\
			\hline
		\end{tabular}	
	\end{center}
\end{table}
}

\longv
{
Let $A$ be any approximation algorithm  for \tumpg that \emph{always} returns a
feasible solution $S$ with weight $w(S) \geq r . w(OPT)$, for
some fixed $r \in [0,1]$, where $OPT$ refers to an \emph{optimal} solution . Then $r$ is said to be the \emph{approximation bound}
of the algorithm $A$.
Moreover, if an algorithm has an approximation bound of $r$ for the $\tumpo$
problem, then $r$ acts as an approximation bound for any $\tump(\gamma < 1)$ as
well, because \tumpo is the worst case instance of \tumpg. 
}
{}

\comment{
\subsubsection{\randk}

The algorithm \emph{randomly} selects a set $S \subset V$ of size $k$.  As we
analyze later, the approximation bounds for this algorithm are not good enough
in practice; hence, we propose algorithms based on the greedy paradigm and
linear programming.
}

\longv
{
\subsection{\lpr}

Here, we propose a \emph{linear programming (LP)} based heuristic for the
\tumpg problem.  The \lpr solution is based on the LP relaxation of the integer programming
formulation $IP_{\tumpg}$ specified for the optimal solution.  This relaxation allows the
variables $x_i, U_j$ to take fractional values, i.e., $\forall i, \ \forall j, \
0 \leq x_i, U_j \leq 1$.  Assume that $S^* = \{x_i^*,U_j^*\}$ denotes the
(optimal) solution to the above linear program.  We derive an integer
\emph{approximate} solution $\hat{S} = \{\hat{x_i}, \hat{U_j}\}$ from $S^*$ as follows.
We pick the highest $k$ values from $(x_1^*, \dots, x_n^*)$ and
set the corresponding $\hat{x_i}$ to $1$.  
\comment{
Although this technique is simple, it is hard to analyze its approximation bound because
the probability that $\hat{x_i} = 1$ depends not only on the value of $x_i^*$
but its relative position in the ordering of the values $(x_1^*, \dots, x_n^*)$. 
}

The expensive step of this approach is solving the linear program that involves $O(m+n)$ constraints and $O(m+n)$ variables.
The running time of this approach, is although  considerably less than the optimal algorithm based on integer programming ($IP_{\tumpg}$), but still impractical for any reasonably sized dataset, as shown in the experiments, discussed in Section~\ref{sec:results}. Therefore, we next, propose a set of approximation algorithms based on greedy paradigm, that not only offer faster running times, but also perform well in practice. 
}
{
}

\comment{
Thus, we apply the technique of \emph{randomized rounding}, where the sampling
process is independent of the order of $x_i^*$'s.  Every $\hat{x_i}$ is set to
$1$ with probability $p_i$:
\begin{align}
	\label{eq:prob_value}
	\forall i, \ p_i = \frac{k}{\sum_{\forall i} (x_i^*)^\phi}. (x_i^*)^\phi
	\text{ where } \phi = (\dmax)^{-\frac{\gamma-\epsilon}{1-\epsilon}}
\end{align}

Since $0<\epsilon \le \gamma \le 1$, we get $0 \le \phi \le 1$.  Further, since
$\sum_i x_i^* = k$ and $0 \leq x_i^* \le 1$, we also get $\sum_i (x_i^*)^{\phi}
\ge k$.  Thus, $0 \le p_i \le 1$.

This essentially implies that the node $v_i \in V$ is sampled with the
probability $p_i$. We have chosen the value of $p_i$ in such a way that if
$x_i^* \le x_j^*$, then $p_i \le p_j$.  The rationale behind choosing the
precise value of $p_i$ will be clear during the analysis in
Section~\ref{sec:bounds}.  
 
Since the sampling process is independent, it is not guaranteed that $|\hat{S}|
= \sum_{i=1}^n \hat{x_i} = k$.  However, it is important to note that
irrespective of the value of $\gamma$, the expected size of the sample, i.e.,
$E[|\hat{S}|] = \sum_i p_i = k$.

If there are less than $k$ nodes in $\hat{S}$, i.e., $|\hat{S}| < k$, we use
\incg to add $k-|\hat{S}|$ nodes from the set $V \setminus \hat{S}$.  If
there are more nodes, i.e., $|\hat{S}| > k$, we employ \decg to reduce the size
to $k$.  The final solution obtained is denoted by $\hat{S^*}$.
}

\longv
{}
{ 
}

\subsection{\simg}

For  each base-station $B_i\in \mathcal{B}$, we define its bottleneck-weight, $\omega_i=\{\sum_{j=1}^m w_{ji}|b_{ji}=1\}$. Typically, a base-station that acts as a bottleneck for a large number of trajectories for a considerable fraction of their total time will have a high bottleneck-weight, and is thus, a good candidate for upgradation.
 The
simple greedy approach picks the $k$ base-stations having the \emph{largest
bottleneck-weights}.

\begin{example}
Consider Example~\ref{ex:one}. Since each of the base-stations $B_9,\dots,B_{12}$ have maximal bottleneck-weight $4/3$, this approach will select  the base-stations $B_{10},B_{11}$ and $B_{12}$ (breaking ties on higher index of base-station), irrespective of the value of $\gamma$. The utilities, thus obtained for different values of $\gamma$, are listed in Table~\ref{tab:utilities}.
\qed
\end{example}

The primary drawback of
this approach is that it is independent of the notion of  trajectory utilities. This is why   
 it does not perform well, especially when $\gamma$ is high. Nevertheless, owing
 to its simplicity, we consider it as the baseline algorithm for \tumpg, and
 compare the performance of other algorithms against it, as
 discussed in detail in Section~\ref{sec:results}.\\

\longv
{
\subsubsection*{Analysis of \simg}
First, we analyze the time and space complexity of \simg, and then analyze its approximation bound.

\begin{thm}
The time and space complexity bounds for \simg are $O(m\dmax+n\log k)$ and $O(m\dmax)$ respectively, where $m$ is the total number of trajectories, \dmax is the maximum length of any trajectory, $n$ is the total number of base-stations, and $k$ is the budget parameter.
\end{thm}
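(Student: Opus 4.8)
The plan is to decompose \simg into its two natural phases—computing the bottleneck-weight $\omega_i$ for every base-station, and then extracting the $k$ base-stations of largest weight—and to bound each phase separately, since the total complexity is their sum. The time and space claims will then follow by collecting the per-phase bounds and observing that the auxiliary storage is dominated by the input size.

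For the weight-computation phase, I would iterate over the $m$ trajectories, and for each $T_j$ traverse its sequence of tuples $\Phi_i = \langle B_i, \Delta_i, \eta_i\rangle$. A first linear pass over $T_j$ computes the denominator $\sum_{B_i\in T_j}\Delta_i$, and a second pass computes each weight $w_{ji}=\Delta_i/\sum_{B_i\in T_j}\Delta_i$ and, whenever the bottleneck indicator $b_{ji}=1$, adds it to an accumulator entry $\omega_i$. The work is proportional to the number of (trajectory, base-station) incidences, which is $\sum_{j=1}^m |T_j| \le m\dmax$ since $\dmax$ is the maximum trajectory length. Hence this phase costs $O(m\dmax)$, covering both the computation of the $w_{ji}$'s and the accumulation of the $\omega_i$'s.

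For the selection phase, the key observation—needed to obtain a $\log k$ rather than a $\log n$ factor—is to maintain a min-heap of capacity $k$ over base-station weights while scanning the $n$ entries of $\omega$. For each base-station I either insert it (while the heap is not full) or compare it to the heap minimum and replace when it is larger; each heap operation costs $O(\log k)$, giving $O(n\log k)$ in total and leaving exactly the $k$ largest-weight base-stations in the heap. Summing the two phases yields the claimed $O(m\dmax + n\log k)$ time bound.

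For the space bound, the stored trajectories occupy $O(m\dmax)$ (one entry per incidence), the weight array $\omega$ occupies $O(n)$, and the bounded heap occupies $O(k)$. The subtlety that makes these collapse is the pair of inequalities $k \le n \le m\dmax$: every candidate base-station is, by definition, a bottleneck on at least one trajectory and so contributes at least one incidence, giving $n\le m\dmax$, while $k\le n$ holds since we select $k$ out of $n$. Thus $O(m\dmax)$ is the dominant term and the space bound follows. I expect no genuinely hard step here; the only points requiring care are the bounded-heap argument that produces $\log k$ instead of $\log n$, and the incidence-counting inequality $n\le m\dmax$ that absorbs the auxiliary arrays into the input size—the remainder is a routine per-incidence counting argument.
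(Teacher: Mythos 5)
Your proposal is correct and follows essentially the same route as the paper's proof: a linear scan over all $O(m\dmax)$ (trajectory, base-station) incidences to accumulate the bottleneck-weights, followed by a top-$k$ selection in $O(n\log k)$ time, with the space bound collapsing to $O(m\dmax)$ via $n \le m\dmax$. The only difference is that you spell out the bounded min-heap argument for the $\log k$ factor, which the paper instead delegates to a citation.
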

\begin{proof}
First we analyze the time complexity. Given that there are $m$ trajectories, and each of them has length at most $\dmax$, scanning the input and computing the bottleneck-weight of each base-station, takes $O(m\dmax)$ time. Further, computation of top-$k$ base-stations w.r.t. their bottleneck-weights takes $O(n\log k)$ time \cite{ArtOfProgrammingVol3} where $n$ is the total number of base-stations and $k$ is the number of base-stations to be upgraded. Thus, the time complexity of \simg is $O(m\dmax+n\log k)$. 

As we analyze the space complexity, assuming each trajectory is stored as a list of tuples $\Phi$ taking $O(\dmax)$ space, the total input takes $O(m\dmax)$ space. We require $O(n)$ space for computation of the bottleneck-weights ( $O(1)$ space for each node). Since $md \ge n$, the  space complexity, is thus $O(m\dmax)$.
\end{proof}
}
{
}  
\longv
{

Unfortunately, this algorithm has \emph{no  approximation
bound} for the \tumpg problem, as shown in Table~\ref{tab:summary of analysis}. 
\begin{thm}
	\label{thm:sim_tumpo}
	\simg has no bounded approximation for \tumpg.
\end{thm}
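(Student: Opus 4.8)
The plan is to establish the impossibility by exhibiting, for a fixed budget $k$, an instance on which \simg outputs utility exactly $0$ while the optimum is strictly positive. This alone finishes the argument: if \simg admitted an approximation bound $r\in(0,1]$, then on such an instance we would need $0=w(\simg)\ge r\cdot w(OPT)>0$, a contradiction. Since \tumpo is an instance of the \tumpg class, it suffices to construct the witness at $\gamma=1$, where a trajectory becomes $1$-bottleneck-free only when \emph{all} of its base-stations are upgraded. Indeed, the $\gamma=1$ column of Table~\ref{tab:utilities} already supplies one such witness, with $w(\simg)=0$ and $w(OPT)=2$.

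To show this is not an artifact of a single small instance, I would then describe a scalable family. The idea is to decouple the quantity that \simg optimizes --- the per-station bottleneck-weight $\omega_i$ --- from the quantity the objective rewards at $\gamma=1$, namely \emph{joint} completion of a whole trajectory. Concretely, I would use $k$ ``popular'' base-stations $A_1,\dots,A_k$ and, for each $\ell$, one length-$k$ trajectory consisting of $A_1,\dots,A_k$ with $A_\ell$ removed and replaced by a fresh degree-one ``satellite'' station. Each $A_i$ then lies on $k-1$ trajectories and (taking equal time shares as in Example~\ref{ex:one}, so $w_{ji}=1/k$) acquires weight $(k-1)/k$, whereas every satellite has weight $1/k$. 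In addition I would add one ``private'' trajectory of length $k$ on fresh stations $Q_1,\dots,Q_k$, each of weight $1/k$.

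The verification would proceed in three steps. First, because every $A_i$ strictly outweighs every satellite and every $Q_i$ (replicating the popular trajectories, or taking $k\ge 3$, makes the gap strict and removes ties), \simg selects exactly the set $\{A_1,\dots,A_k\}$. Second, at $\gamma=1$ this set completes no trajectory: each popular trajectory is missing its satellite, and the private trajectory shares no station with $\{A_1,\dots,A_k\}$, so $w(\simg)=0$. Third, the optimum selects $Q_1,\dots,Q_k$ and makes the private trajectory $1$-bottleneck-free, so $w(OPT)\ge 1$. The achieved ratio is therefore $0$; moreover, by adding many independent private trajectories on pairwise disjoint fresh stations one can drive the absolute optimum arbitrarily high while \simg remains at $0$.

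I expect the only delicate point to be Step~1: guaranteeing that the top-$k$ choice of \simg consists precisely of the popular stations and \emph{never} slips in one of the low-weight stations that could complete a trajectory. This requires making the weight separation strict under the algorithm's tie-breaking rule, which is why I would replicate trajectories or restrict to $k\ge 3$ rather than rely on the boundary case $k=2$. Everything else is a direct weight computation from the definition $\omega_i=\sum_{j:b_{ji}=1} w_{ji}$ together with the $\gamma=1$ completion condition.
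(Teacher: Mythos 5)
Your proposal is correct and follows essentially the same route as the paper: both exhibit a $\gamma=1$ witness instance in which the top-$k$ stations by bottleneck-weight never jointly complete any trajectory (so \simg achieves utility $0$) while an optimal solution completes at least one trajectory, which rules out any multiplicative bound. The paper's gadget differs only in its details --- it pairs length-$2$ trajectories across two halves of the station set and uses one long trajectory to inflate the weights of the first half, whereas you use $k$ length-$k$ trajectories each missing one popular station plus a disjoint private trajectory --- but the decoupling of per-station weight from joint completion, and the care you take with strict weight separation and tie-breaking, match the paper's argument.
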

\begin{proof}
	To show that \simg does not offer any approximation bound, we	consider the following instance of \tumpg problem. Let $\mathcal{B}=\{B_1,\dots,B_n\}$ be the set of  base-stations, and $\mathcal{T}=\{T_1,\dots,T_m\}$ be the set of trajectories such that $m= \lfloor \frac{n}{2} \rfloor +1$. For $j=1,\dots,m-1$, the trajectory $T_j$ of length $2$, passes through the base-stations $B_j$ and $B_{j^\prime}$ where $j^\prime = j+\lfloor \frac{n}{2}\rfloor$.  The trajectory $T_m$ of length $\lfloor \frac{n}{2}\rfloor$, passes through the base-stations $B_1,\dots,B_{\lfloor \frac{n}{2}\rfloor}$. We assume that each base-station is a bottleneck for each of the trajectories passing through it. Additionally, for any trajectory, we assume that the $\triangle,\eta$ values are same across all the base-stations that the trajectory passes through. Therefore, we note that for any trajectory $T_j$, $j=1,\dots,m-1$, that passes through the base-stations $B_j$ and $B_{j^\prime}$, $w_{jj}=w_{jj^\prime}=0.5$. For trajectory $T_m$, for $i=1,\dots,\lfloor \frac{n}{2}\rfloor$, we note that $w_{mi}=1/\lfloor \frac{n}{2}\rfloor$. Therefore the bottleneck weight $\omega_i$ for base-station $B_i$ is given by:
\begin{align}	
	\omega_i = 
			\begin{cases}
				0.5 + 1/\lfloor \frac{n}{2}\rfloor & \text{ for } i=1,\dots,\lfloor\frac{n}{2}\rfloor  \\
				0.5 & \text{ for } i=\lfloor\frac{n}{2}\rfloor+1,\dots,n 
			\end{cases}	
\end{align}
Thus, for any $2 \le k \le \lfloor\frac{n}{2}\rfloor$, \simg would select any $k$ base-stations from the set $B_1,\dots,B_{\lfloor\frac{n}{2}\rfloor}$, due to their higher bottleneck weight. But any such selection cannot make any trajectory $\gamma$-bottleneck-free for $\gamma=1$. On the other hand, an optimal solution would make at least one trajectory $\gamma$-bottleneck-free. Thus, we show that \simg has no approximation bound for \tumpg.
	\hfill{}
\end{proof}
The time and space complexity bounds of the proposed algorithms are stated in  Table~\ref{tab:summary of analysis}. The proofs are provided in \cite{FullTUMP}. Further, we show that this algorithm has \emph{no approximation bound} for the \tumpg problem, \cite{FullTUMP}.
}
{
}
\subsection{\incg}
Based on the principle of \emph{maximizing marginal gain}, this approach starts
with an empty set of nodes $S_0=\varnothing$, and incrementally adds nodes
such that each successive addition of a node produces the maximal marginal
gain in the weight of the solution.  The algorithm proceeds in iterations
$\theta=\{1,\dots,k\}$.  In the beginning of iteration $\theta$, suppose the
existing solution is the set of nodes $S_{\theta-1}$ with weight
$w(S_{\theta-1})$.  The node $v_{\theta}$ from the remaining set $V \setminus
S_{\theta-1}$ is added such that $w(S_{\theta-1} \cup \{v_\theta\})$ is maximal.
The new set is referred to as $S_\theta$.

In any iteration, if multiple candidate nodes have the same 
maximal marginal utility, we select the one with the largest bottleneck-weight.
Still, if ties remain, then without loss of generality, we
break the tie by selecting the node with the highest index (the indices are
arbitrary but unique).

\begin{example}
We first evaluate \incg on Example~\ref{ex:one} for $\gamma=0.33$. At iteration $1$, nodes $B_9,\dots,B_{12}$  have the same maximal marginal utility of $4$ and the
same bottleneck-weight $4/3$. So, we pick $B_{12}$ as it has the highest index.
Next, we select $B_{11}$, as it offers the maximal marginal utility of $4$.  The
base-station $B_{10}$ does not offer the maximal marginal gain any more.  The
base-station $B_4$ with marginal gain of $3$ becomes the best choice next. Thus, we obtain a net utility of $11$, which equals the \emph{optimal} utility. 

\longv
{
The selections and utilities for other values of $\gamma$ case, are shown in Table~\ref{tab:utilities}. Noticeably, for $\gamma=1$, this algorithms offers $0$ utility.
}
{
However, as shown in Table~\ref{tab:utilities}, this algorithm offers $0$ utility for the case $\gamma=1$. 
}
  Evaluating this case, in iteration $1$, we find that all the base-stations offer $0$ marginal utility, and so we sample the base-stations $B_9,\dots,B_{12}$ on the basis of maximal bottleneck-weight. Eventually, we pick $B_{12}$ owing to its highest index.
  In the following $2$ iterations, once again, we find that the marginal utility
  offered by any  base-station is $0$. Respecting the tie-breaking criteria, we
  pick $B_{11}$ and $B_{10}$ in respective order. This strategy, however, does
  not make any trajectory $\gamma$-bottleneck-free, and yields $0$ utility.
  \qed
\end{example}

 The
  next algorithm (\decg) addresses this limitation.

\longv
{
\subsubsection*{Analysis of \incg}
Fist we analyze the complexity bounds, followed by the approximation bound.

\begin{thm}
The time and space complexity bounds for \incg are $O(m\dmax^2)$ and $O(m\dmax)$ respectively.
\end{thm}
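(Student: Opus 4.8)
The plan is to split the cost of \incg into a one-time preprocessing phase and the main greedy loop, and to bound the loop by an \emph{amortized} argument over trajectories rather than by a naive per-iteration count (which would only give $O(km\dmax)$). First I would fix the bookkeeping maintained throughout the run: for every trajectory $T_j$ I keep its running utility $W_j$, initialized from its non-bottleneck base-stations via Eq.~\eqref{eq:traj_weight}, and for every candidate base-station $B_i$ I keep its current marginal gain $g(B_i)$, i.e. the number of trajectories not yet $\gamma$-bottleneck-free that adding $B_i$ alone would push to utility $\ge \gamma$. Reading the input, computing each weight $w_{ji}$, setting each indicator $b_{ji}$, and initializing all the $W_j$ and $g(B_i)$ costs $O(m\dmax)$, since $\sum_j |T_j| \le m\dmax$ is the total number of node--hyper-edge incidences. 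The space bound is read off here: the trajectories with their per-incidence data ($\Delta_i,\eta_i,w_{ji},b_{ji}$) occupy $O(m\dmax)$, and the arrays of $W_j$ and $g(B_i)$ together with the solution set $S_\theta$ add only $O(m+n)$; every candidate base-station is incident on some trajectory, so $n \le m\dmax$ and the total is $O(m\dmax)$.

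For the time bound the key step is the amortized accounting of the loop. Each iteration selects the node $v_\theta$ of maximum $g(\cdot)$ (ties broken by bottleneck-weight, then index), inserts it into $S_\theta$, and repairs the bookkeeping: for every trajectory $T_j$ incident on $v_\theta$ with $b_{j\theta}=1$ I raise $W_j$ by $w_{j\theta}$, and whenever this changes which single base-stations would carry $T_j$ across the threshold $\gamma$, I re-examine the base-stations $B_i \in T_j$ and adjust their $g(B_i)$. The crucial observation is that $W_j$ can be raised at most $|T_j| \le \dmax$ times over the \emph{entire} run (once per bottleneck base-station of $T_j$ that ever enters the solution), and each such change forces a rescan of at most the $|T_j| \le \dmax$ base-stations lying on $T_j$. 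Hence the repair work charged to $T_j$ is $O(|T_j|^2) = O(\dmax^2)$, and summing over all $m$ trajectories gives $O(m\dmax^2)$, which absorbs the $O(m\dmax)$ preprocessing and yields the claimed bound.

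I expect the main obstacle to be twofold. First, raising $W_j$ can \emph{increase} rather than only decrease neighboring gains: a base-station $B_i \in T_j$ becomes decisive for $T_j$ exactly when $w_{ji} \ge \gamma - W_j$, so as $W_j$ climbs the set of decisive base-stations can both gain and lose members; I must argue that rescanning the $\le\dmax$ base-stations of $T_j$ on each of its $\le\dmax$ weight changes correctly re-establishes every affected $g(B_i)$, which is precisely what pins the per-trajectory cost at $O(\dmax^2)$. Second, I must ensure that locating the maximum-gain node in each of the $k$ iterations does not smuggle an extra factor of $k$ (or $n$) past the $O(m\dmax^2)$ update budget; the clean way is to keep the gains in a bucket structure indexed by gain value so that extraction of the maximum and reinsertion after each adjustment is $O(1)$ amortized, whereby the selection cost is subsumed by the $O(m\dmax^2)$ update operations already counted. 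Verifying that this selection overhead is genuinely dominated by the amortized update work is the delicate part of the argument.
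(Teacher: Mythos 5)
Your proof is correct and follows essentially the same route as the paper's: an $O(m\dmax)$ preprocessing pass, plus an amortized bound on the gain-repair work obtained by double-counting node--trajectory incidences (you charge $O(\dmax)$ rescans to each of the at most $\dmax$ selections touching a trajectory, the paper equivalently charges $O(\delta_\theta\dmax)$ to each selected node and uses $\sum_i\delta_i\le m\dmax$), and the same $n\le m\dmax$ observation for space. The one divergence is the selection step: the paper simply scans all $n$ gains each iteration and honestly reports $O(m\dmax^2+kn)$ in its proof and summary table (mentioning a Fibonacci-heap variant it chose not to implement), whereas your bucket-queue refinement is what would actually justify the cleaner $O(m\dmax^2)$ figure appearing in the theorem header.
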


\begin{proof}
We first analyze the time complexity of the algorithm. As a pre-processing step, the algorithm computes the bottleneck-weight and initial marginal utility for each node, and computes the initial trajectory utility $W_j$ for each trajectory. This step takes $O(m\dmax)$ time. 
\comment{
With two scans of the input, the algorithm computes the bottleneck-weight of each node. This step takes $O(m\dmax)$ time. In each iteration, the algorithm keeps track of the current weight $W_j$ of trajectory $T_j$. Additionally, 
for each iteration $\theta$, for each node $v_i \in V\setminus S_{\theta-1}$, the algorithm keeps track of its (current) marginal utility w.r.t. the current solution $S_{\theta-1}$. 
Before iteration $1$, we compute the marginal utilities for each node $v_i \in V$ which takes $O(m\dmax)$ time. 
}
Next, in  any iteration $\theta$, we select the node $v_\theta$ that has maximal marginal utility (applying the tie-breaking rules, if necessary), and add it to the set $S_\theta$. This step takes $O(n)$ time. Then  for each trajectory $T_j$ incident on $v_\theta$, for each $v_i \in V \setminus S_\theta$ such that $B_i \in T_j$, and $b_{ji}=1$, we check if adding $v_i$ to $S_\theta$ would make $T_j$ $\gamma$-bottleneck-free. If so, we increment the current marginal utility of $v_i$ by $1$.
If $\delta_\theta$ is the degree of the node $v_\theta$, then this step takes $O(\delta_\theta \dmax)$ time. Since $\sum_{i=1}^n \delta_i \le m\dmax$, hence the total time spent in the above step over $k$ iterations, is $O(m\dmax^2)$. 
\comment{
 Taking advantage of the current marginal utilities, 
this approach avoids re-computation of marginal utilities for each node in every iteration.  
}
The total time complexity of the algorithm, running over $k$ iterations, is thus $O(m\dmax^2 +kn)$. 

To analyze the space complexity, we find that besides the input that takes $O(m\dmax)$ space, the algorithm requires $O(1)$ space for each of the trajectories to store their current utilities that gets updated after each iteration. Further, we need $O(1)$ space for each of the nodes to store their marginal utility,  bottleneck-weight, and to hold the information if they are selected. Thus, the space complexity of the algorithm is $O(m\dmax+m+n)=O(m\dmax)$.
\end{proof}

The time complexity of this algorithm can be further improved to $O(m\dmax^2 + k\log n)$ by using advanced data structures such as Fibonacci Heaps \cite{FibonacciHeap} to store the marginal utilities of the nodes. However since the total time for updating of marginal utilities of the nodes  $(O(m\dmax^2))$ dominates the time for computing the node with maximal marginal gain ($O(k\log n)$), we avoided this implementation.
}
{
}

\longv
{
Unfortunately, similar to \simg, \incg has \emph{no 
approximation bound} for \tumpg.
\begin{thm}
	\label{thm:inc_tumpo}
	\incg has no bounded approximation for \tumpg.
\end{thm}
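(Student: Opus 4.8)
The plan is to prove a negative result by exhibiting a single instance on which \incg is forced to return utility $0$ while the optimal solution has positive utility. Recall that an approximation bound $r$ requires $w(S) \ge r\cdot w(OPT)$ on \emph{every} instance; hence one instance with $w(S)=0$ and $w(OPT)\ge 1$ already forces $r\le 0$, ruling out any bounded guarantee. As in the analysis of \simg, I would build the instance at $\gamma=1$, since \tumpo is the worst case and a bound there would transfer to all $\gamma<1$. This is also consistent with the $\gamma=1$ column of Table~\ref{tab:utilities}, where \incg already records utility $0$.

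The mechanism I would exploit is specific to $\gamma=1$: a trajectory becomes $\gamma$-bottleneck-free only when \emph{all} of its bottleneck base-stations are upgraded, so as long as \incg has not yet completed any trajectory, every candidate node has marginal gain $0$ and the selection is governed entirely by the tie-breaking rule (largest bottleneck-weight, then highest index). The idea is therefore to plant a small \emph{target} trajectory $T_0$ of length $2$ on base-stations $\{B_1,B_2\}$ (both bottlenecks, each with weight $1/2$), which the optimal solution upgrades to obtain utility $1$, together with a family of \emph{decoy} trajectories whose base-stations carry \emph{strictly larger} bottleneck-weight than $B_1$ and $B_2$. Because the marginal gain stays $0$, \incg is steered by bottleneck-weight into the decoy nodes and never touches $B_1$ or $B_2$.

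To make this airtight I would give every decoy trajectory length $\ell>k$, so that it can never be completed within the budget even if all $k$ chosen base-stations happened to lie on it, and I would route enough decoy trajectories through each of at least $k$ designated decoy base-stations so that each of them accumulates bottleneck-weight exceeding $1/2$. With these two properties, a short induction on the $k$ iterations shows the marginal gain remains $0$ throughout: no decoy trajectory can be filled, and $T_0$ is never touched, so \incg selects exactly $k$ decoy nodes and returns $w(S)=0$, whereas upgrading $\{B_1,B_2\}$ yields $w(OPT)\ge 1$.

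The main obstacle is the joint quantitative requirement on the decoys: the condition $\ell>k$ forces each incident trajectory to contribute only $1/\ell<1/2$ to a node's bottleneck-weight, while the condition ``bottleneck-weight above $1/2$'' demands enough incident trajectories per node. These are reconciled simply by making each decoy node lie on more than $\ell/2$ decoy trajectories, so the construction is always realizable; the only real verification is that the induction pinning the marginal gain at $0$ holds across all $k$ steps, which is exactly where the $\ell>k$ condition does the work. This parallels the \simg argument but additionally defeats the marginal-gain refinement that \incg adds on top of raw bottleneck-weight, which is what motivates introducing \decg next.
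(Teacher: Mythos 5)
Your overall strategy coincides with the paper's: fix $\gamma=1$, observe that no single upgrade can complete a trajectory so every marginal gain is $0$ and \incg is driven entirely by its tie-breaking rule, and then build an instance in which the tie-breaking steers all $k$ picks away from the one trajectory that the optimum can complete, giving $w(S)=0$ versus $w(OPT)\ge 1$. That logic, and the observation that a single such instance rules out any approximation factor, is correct.

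The concrete construction has a gap, however. Your decoy trajectories have length $\ell>k$ with every base-station a bottleneck, so at $\gamma=1$ the sum of the $k$ largest bottleneck weights on each decoy is $k/\ell<1=\gamma$: these are exactly the trajectories that the paper's stated preprocessing step (``discard the trajectories that cannot be made $\gamma$-bottleneck-free by upgrading any set of $k$ base-stations'') removes before any algorithm runs. Once the decoys are discarded, the decoy base-stations no longer carry bottleneck-weight exceeding $1/2$, the tie-break favours $B_1$ and $B_2$, and \incg recovers the optimum, so the instance no longer witnesses unboundedness. The infeasibility of the decoys ($\ell>k$) is load-bearing in your induction --- it is what pins the marginal gains at $0$ through all $k$ iterations --- so it cannot simply be dropped; making the decoys feasible reopens the possibility that the $k$-th pick completes one of them. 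The paper sidesteps this by using two \emph{feasible} trajectories of length exactly $k=n/2$, one on the odd-indexed and one on the even-indexed base-stations, with all bottleneck-weights equal: the index-based tie-break then selects the top half of the indices, which splits between the two trajectories and completes neither, while the optimum upgrades all base-stations of one trajectory. You would need a similar interleaving device --- feasible decoys that no greedily ordered $k$-subset can complete --- for your construction to survive the pruning step.
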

\begin{proof}
	To show that \incg does not offer any approximation bound, we	consider the following instance of \tumpg problem. Let $\mathcal{B}=\{B_1,\dots,B_n\}$ be the set of  base-stations, and $\mathcal{T}=\{T_1,T_2\}$ be the set of trajectories. Without loss of generality, let $n$ be a even number. Let trajectory $T_1$ pass through the base-stations with odd index, i.e., $B_1,B_3,B_5,\dots,B_{n-1}$, and let trajectory $T_2$ pass through the base-stations with even index, i.e., $B_2,B_4,B_6,\dots,B_{n}$.	
We assume that each base-station is a bottleneck for each of the trajectories passing through it. Additionally, for any trajectory, we assume that the $\triangle,\eta$ values are same across all the base-stations that the trajectory passes through. Since there is only a single trajectory (of length $n/2$) incident on each base-station, the bottleneck weight $\omega_i$, for any base-station $B_i \in \mathcal{B}$, is $2/n$. Let $k=n/2$ and $\gamma=1$. 

Now let us evaluate \incg on this instance. In each iteration, $1,\dots,k$,  the marginal utility of each node is $0$. Respecting the tie-breaking criteria, in iteration $1,\dots,k$, we select the nodes $B_n,B_{n-1},\dots,B_{\lfloor\frac{n}{2}\rfloor +1}$, respectively. However, by this selection, neither of the trajectories could become $\gamma$-bottleneck-free. 
 On the other hand, an optimal solution would select all the $n/2$ base-stations on either of trajectories, $T_1$ or $T_2$, thus making at least one trajectory $\gamma$-bottleneck-free. Thus, we show that \incg has no approximation bound for \tumpg.		%
	\hfill{}
\end{proof}
The time and space complexity bounds of this algorithm are stated in  Table~\ref{tab:summary of analysis}. The  proofs are provided in \cite{FullTUMP}. Further,  we show that this algorithm has \emph{no 
approximation bound} for the \tumpg problem, \cite{FullTUMP}.
}
{
}

\subsection{\decg}

This algorithm operates in the reverse order by \emph{minimizing marginal loss}.  It starts with the full set of nodes $V$ and
successively removes nodes in a manner that minimizes the marginal loss in
the weight of the resulting set.  More precisely, it starts with $S_0=V$, and
removes one node in each iteration $\theta=\{1,\dots,n-k\}$.  At the start of
iteration $\theta$, suppose the existing set of nodes is $S_{\theta-1}$ with
weight $w(S_{\theta-1})$.  From this set, the node $v_\theta$ is removed such
that $w(S_{\theta -1} \setminus \{v_\theta\})$ is maximal.  The new set is
referred to as $S_\theta$.

Moreover, after each iteration, all trajectories that can no longer be made
$\gamma$-bottleneck-free are pruned.  In any given iteration, if multiple
candidate nodes qualify to be deleted, then the one with the smallest
bottleneck-weight is chosen. Still, if there are multiple candidates, the tie is
broken by removing the node with the lowest index.

\begin{example}  
We first evaluate \decg on Example~\ref{ex:one} for $\gamma=1$.  At iteration
$1$, any of the base-stations, $B_1,B_2,B_3,B_5,B_6$ and
$B_7$, have the same minimal marginal loss in utility of $1$. Respecting the tie-breaking
criteria, we prune $B_1$. Consequently, trajectory $T_1$ becomes infeasible and
is, thus, pruned. This results in marginal utility of $B_7$ to become $0$, as
there is no other incident trajectory on $B_7$. So, in the next iteration, we prune
$B_7$. Proceeding in this manner, the order of deletions of the nodes in
subsequent iterations is as shown in Table~\ref{tab:node-deletions by decg}. At
the end of 12 iterations, we are left with the base-stations $B_{10},B_{12}$ and
$B_{14}$ which render 2 trajectories, $T_8$ and $T_9$, $\gamma$-bottleneck-free,
which is same as the optimal algorithm.

\longv
{
 Table~\ref{tab:utilities} shows the output of \decg for other values of $\gamma$, along with Table~\ref{tab:node-deletions by decg} which shows the order of deletions of nodes in the $n-k=12$ iterations. 
}
{
 Table~\ref{tab:utilities} shows the output of \decg for the case $\gamma=0.33$, along with Table~\ref{tab:node-deletions by decg} which shows the order of deletions of nodes in the $n-k=12$ iterations.
}
\qed
\end{example}

\longv{
\begin{table}[t]\scriptsize
	\begin{center}
	\caption{Order of node-deletions by \decg for  Example~\ref{ex:one}.}
		\label{tab:node-deletions by decg}
		\begin{tabular}{|c||c|}
			\hline
			$\gamma$ & Order of Nodes deleted during iteration 1 to 12 $\rightarrow$ \\
\hline		
$0.33$ &	 $B_1,B_2,B_3,B_5,B_6,B_7,B_8,B_{13},B_{14},B_{15},B_9,B_{10}$\\	
			\hline
	$0.5$ & $B_1,B_2,B_3, B_8, B_{13},B_{14},B_{15},B_5,B_6,B_4,B_7,B_9$ \\
			\hline			
			$1$ & $B_1,B_7,B_2,B_6,B_3,B_4,B_5,B_8,B_9,B_{13},B_{11},B_{15}$\\
			\hline						
		\end{tabular}
		
	\end{center}
\end{table}
}
{
  \begin{table}[t]\scriptsize
	\begin{center}
	\caption{Order of node-deletions by \decg for Example~\ref{ex:one}.}
		\label{tab:node-deletions by decg}
		\begin{tabular}{|c||c|}
			\hline
			$\gamma$ & Order of Nodes deleted during iteration 1 to 12 $\rightarrow$ \\
\hline		
$0.33$ &	 $B_1,B_2,B_3,B_5,B_6,B_7,B_8,B_{13},B_{14},B_{15},B_9,B_{10}$\\	
			\hline
			$1$ & $B_1,B_7,B_2,B_6,B_3,B_4,B_5,B_8,B_9,B_{13},B_{11},B_{15}$\\
			\hline						
		\end{tabular}
		
	\end{center}
\end{table}
}


\longv
{
\subsubsection*{Analysis of \decg}
The time complexity analysis of \decg is similar to that of  \incg, with the difference that in this case there are $n-k$ iterations. Thus the time complexity  is $O(m\dmax^2+(n-k)n)$, as shown in Table~\ref{tab:summary of analysis}. Since, $k$ is likely to be less that $n/2$, \decg is expected to take more number of iterations, and is therefore, more expensive in terms of time, when compared to \incg.
The space complexity of \decg is same as in the case of \incg, i.e., $O(m\dmax)$.

We next analyze the approximation bound of \decg algorithm.
\begin{thm}
	\label{thm:dec_tumpo}
	The approximation bound of \decg for $\tumpg$ is ${\binom{k}{\dmax}} /
	{\binom{n}{\dmax}}$.
\end{thm}

\begin{proof}
	To analyze the worst case scenario, we assume that each base-station is a bottleneck w.r.t. each of the incident trajectories, and further, $\gamma=1$.
	Assume that $S_\theta$ denotes the set of nodes in the sub-hyper-graph
	resulting at the end of iteration $\theta =\{1,\dots,n-k\}$.  Since
	$\gamma=1$, $w(S_\theta)$ denotes the number of hyper-edges \emph{induced} in this
	sub-hyper-graph, i.e., all its nodes must be in $S_\theta$.  Following the above assumption,  once the node $v_\theta$ is removed, all its
	incident hyper-edges in $S_{\theta-1}$ are removed, because the
	corresponding trajectories can no longer become $\gamma$-bottleneck-free.  Therefore,
	the node $v_\theta$ (selected to be pruned due to its minimal marginal loss) must have the minimal degree in the
	sub-hyper-graph induced over the nodes in $S_{\theta-1}$. This ensures
	that the weight of the resulting set $w(S_\theta) = w(S_{\theta-1} \setminus
	\{v_\theta\})$ is maximal.

Note that the sum of degrees of
	nodes in $S_{\theta-1}$ is equal to the sum of the lengths of the
	trajectories induced in the sub-hyper-graph $S_{\theta-1}$ which is at most
	$w(S_{\theta-1})\dmax$ where $\dmax$ is the maximum length of any
	trajectory.  Since $|S_{\theta-1}| = n - \theta + 1$, the average degree of
	a node in $S_{\theta-1}$ is at most $\frac{w(S_{\theta-1})\dmax}{|
	S_{\theta-1}|}=\frac{w(S_{\theta-1})\dmax}{n-\theta+1}$.  After \decg
	removes the node $v_\theta$ with  minimal degree (which is at most the average
	degree), the weight of the sub-hyper-graph $S_\theta$ is bounded as
	%
\begin{small}
	\begin{align*}
		w\left(S_\theta\right) &\ge w\left(S_{\theta -1}\right)
			- w\left(S_{\theta -1}\right) \frac{\dmax}{n-\theta +1}
	\end{align*}
	\end{small}
	Hence, after $n-k$ iterations
	%
\begin{small}
	\begin{align*}
	w\left(S_{n-k}\right) & \ge w(S_0) \prod_{\theta=1}^{n-k}
		\left(1-\frac{\dmax}{n-\theta +1}\right)
	\end{align*}
	\end{small}

	For $\tumpo$, we can assume that $k \geq \dmax$ since any
	trajectory with $|T_j|>k$ can be pruned as it can never be made
	$\gamma$-bottleneck-free.  Hence, using $n \geq k \geq \dmax$ and $w(S_0) = m$,
	%
\begin{small}
	\begin{align*}
	w\left(S_{n-k}\right) &\geq m \frac{k(k-1)\dots(k-\dmax+1)}{n(n-1)\dots(n-\dmax+1)}
		= m\left(\frac{\binom{k}{\dmax}}{\binom{n}{\dmax}}\right)
	\end{align*}
\end{small}
	Since weight of any optimal solution is at most $m$, the approximation bound of \decg is ${\binom{k}{\dmax}} / {\binom{n}{\dmax}}$.
	\hfill{}
\end{proof}
}
{
}
%
\longv
{
}
{
\subsection{Properties of the Approximation Algorithms}
The  time and  space complexities and the approximation bounds of the proposed
algorithms are summarized in Table~\ref{tab:summary of analysis}. The proofs are
available in \cite{FullTUMP}. Noticeably, \simg and \incg have \emph{no
approximation bounds}. The approximation bound for \decg is, however, non-zero.
The following theorem provides the analysis.

\begin{thm}
	\label{thm:dec_tumpo}
	The approximation bound of \decg for $\tumpg$ is ${\binom{k}{\dmax}} /
	{\binom{n}{\dmax}}$.
\end{thm}
\begin{proof}
	To analyze the worst case scenario, we assume that each base-station is a bottleneck w.r.t. each of the incident trajectories, and further, $\gamma=1$.
	Since
	$\gamma=1$, $w(S_\theta)$ denotes the number of hyper-edges \emph{induced} in this
	sub-hyper-graph, i.e., all its nodes must be in $S_\theta$.  Once the node $v_\theta$ is removed, all its
	incident hyper-edges in $S_{\theta-1}$ are removed, because the
	corresponding trajectories can no longer become $\gamma$-bottleneck-free.  Therefore,
	the node $v_\theta$ (selected to be pruned) must have the minimal degree in the
	sub-hyper-graph induced over the nodes in $S_{\theta-1}$. This ensures
	that the weight of the resulting set $w(S_\theta) = w(S_{\theta-1} \setminus
	\{v_\theta\})$ is maximal.

Note that the sum of degrees of
	nodes in $S_{\theta-1}$ is equal to the sum of the lengths of the
	trajectories induced in the sub-hyper-graph $S_{\theta-1}$ which is at most
	$w(S_{\theta-1})\dmax$.  Since $|S_{\theta-1}| = n - \theta + 1$, the average degree of
	a node in $S_{\theta-1}$ is at most $\frac{w(S_{\theta-1})\dmax}{n-\theta+1}$.  After \decg
	removes the node $v_\theta$ with  minimal degree (which is at most the average
	degree), the weight of the sub-hyper-graph $S_\theta$ is bounded as
$	
		w(S_\theta) \ge w(S_{\theta -1}) (1
			- \frac{\dmax}{n-\theta +1}).
$	
	Hence, after $n-k$ iterations,
$
	w(S_{n-k})  \ge w(S_0) \prod_{\theta=1}^{n-k}
		(1-\frac{\dmax}{n-\theta +1}).
$	

	As $\gamma=1$, we  assume that $k \geq \dmax$ since any
	trajectory with $|T_j|>k$ can be pruned as it can never be made
	$\gamma$-bottleneck-free.   Using $k \ge \dmax$, and $w(S_0) = m$, we get
$
	w(S_{n-k}) \geq  m(\binom{k}{\dmax}/\binom{n}{\dmax}).
$	
	As the weight of any optimal solution is at most $m$, the proof follows. (The detailed proof is given in \cite{FullTUMP}.)
	\hfill{}	
\end{proof}
}

\subsection{Equivalence of Incremental and One Shot Upgrades}
An interesting and very useful property of the proposed \incg and \decg
algorithms is that they naturally support incremental upgrades of base-stations.
Note that \incg (respectively, \decg) selects (respectively, prunes) one base-station in each iteration, and the selection criteria is independent of the budget parameter $k$.
Hence, it can be shown that, for both these algorithms, piecewise incremental upgrades
is equivalent to a one-shot upgrade, provided the total budget is same in both
the cases.

More formally, if $k_1 + k_2 + \dots = k\ (\forall k_i > 0)$, then successive
applications of \incg (respectively \decg) with budget parameters $k_1$,
followed by $k_2$, etc., would upgrade the same set of base-stations as a
one-shot application of \incg
(respectively, \decg) would with budget parameter $k$.
\longv
{}
{The formal proofs are available in \cite{FullTUMP}.
}
This property is very important as it allows
the network planners to upgrade as and when some
budget is allocated. The overall effect on the network is the same even if the
entire budget was made available at one go. 
\longv
{
First we establish this claim for \incg, followed by \decg.
\begin{prop}
	\label{claim:inc}
	Given any instance of \tumpg,
	\begin{align}
		S_k = S_\theta \cup S'_{k-\theta},\ \forall 1\leq k\leq n,\, 1\leq \theta \leq k
	\end{align}
	where $S_k$ is the set of nodes (corresponding to the upgraded
	base-stations) by \incg($k$), and $S_\theta$, $S'_{k-\theta}$ are the sets
	of nodes provided by two consecutive operations of \incg($\theta$) on
	$V$, and \incg($k-\theta$) on $V \setminus S_\theta$ respectively.
\end{prop}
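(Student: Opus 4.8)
The plan is to exploit the fact that each node chosen by \incg is a deterministic function of the \emph{current partial solution} alone, and in particular is independent of the budget parameter. First I would make this precise by defining a selection map: given a set $A \subseteq V$ of already-upgraded nodes and a candidate pool $P \subseteq V \setminus A$, let
\[
\operatorname{next}(A,P) = \arg\max_{v \in P} \big( w(A \cup \{v\}) - w(A) \big),
\]
with ties broken first by largest bottleneck-weight $\omega_v$ and then by highest index, exactly as specified for \incg. The essential observation is that the marginal gain $w(A \cup \{v\}) - w(A)$ depends only on $A$ and $v$, and that both tie-breaking keys ($\omega_v$ and the index) are intrinsic, budget-independent properties of $v$. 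Hence the selected node does not depend on $|P|$ or on $k$.

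With this map, a run of \incg is nothing but the iteration $A_0 = \varnothing$, $A_i = A_{i-1} \cup \{\operatorname{next}(A_{i-1}, V \setminus A_{i-1})\}$, so that \incg($k$) returns $A_k = S_k$. The two-phase run is described by the same recursion, restarted from $S_\theta$: phase one on $V$ returns $A_\theta = S_\theta$, and phase two, in which the nodes of $S_\theta$ are already upgraded and therefore contribute to every utility evaluation, iterates $B_0 = S_\theta$, $B_j = B_{j-1} \cup \{\operatorname{next}(B_{j-1}, V \setminus B_{j-1})\}$ and returns $S'_{k-\theta} = B_{k-\theta} \setminus S_\theta$.

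The core of the argument is a short induction showing $A_{\theta+j} = B_j$ for all $0 \le j \le k-\theta$. The base case $A_\theta = S_\theta = B_0$ holds because the first $\theta$ steps of the one-shot run are literally phase one. For the inductive step, if $A_{\theta+j} = B_j$, then $A_{\theta+j} = B_j$ forces both arguments of $\operatorname{next}$ to coincide, so the node appended next by the one-shot run, $\operatorname{next}(A_{\theta+j}, V \setminus A_{\theta+j})$, equals the node appended next in phase two, $\operatorname{next}(B_j, V \setminus B_j)$; hence $A_{\theta+j+1} = B_{j+1}$. Taking $j = k-\theta$ yields $S_k = A_k = B_{k-\theta} = S_\theta \cup S'_{k-\theta}$, which is the claim.

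The hard part will be pinning down the precise semantics of ``\incg($k-\theta$) on $V \setminus S_\theta$'' so that the identity of the two selection maps is airtight. Specifically, I must argue that in phase two the already-upgraded set $S_\theta$ enters every weight computation $w(\cdot)$ exactly as it does in the one-shot run after $\theta$ iterations; otherwise the marginal gains, and hence the choices, could diverge. I would also check that the optional pre-processing which prunes trajectories that no budget-$k$ upgrade can make $\gamma$-bottleneck-free is carried out with respect to the \emph{total} budget $k$ in both settings, so that the feasible trajectory set (and the bottleneck-weights used for tie-breaking) is identical across the two runs; any trajectory pruned earlier would have contributed zero marginal gain regardless, so this does not perturb the selection. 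Once these bookkeeping points are settled, the induction closes immediately.
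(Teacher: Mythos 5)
Your proof is correct and takes essentially the same approach as the paper: the paper's argument also rests on the observation that each greedy choice depends only on the already-selected set (so the nodes are picked in a single budget-independent order $v_1,\dots,v_n$), from which $S_k=\{v_1,\dots,v_k\}$ and $S'_{k-\theta}=\{v_{\theta+1},\dots,v_k\}$ follow immediately. Your explicit selection map, induction, and the caveats about the semantics of the second run and of trajectory pruning are a more careful rendering of what the paper compresses into ``the algorithm always selects the nodes in a consistent order.''
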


\begin{proof}
	Observe that the algorithm always selects the nodes in $V$ in a
	consistent order. This follows from the fact that the algorithm breaks all
	ties deterministically (if needed, using the index of the nodes).
	Therefore, without loss of generality, assume that \incg($n$) applied on the
	node set $V$ selects the nodes in the order $v_1,\dots,v_n$. Hence,
	\incg($k$) applied on the node set $V$ would select the nodes $S_k =
	\{v_1,\dots,v_{k}\}$.  For $\theta \leq k$,
	$S_\theta=\{v_1,\dots,v_{\theta}\} \subseteq S_{k}$. Applying
	\incg($k-\theta$) on the set $V \setminus S_{\theta} =
	\{v_{\theta+1},\dots,v_{n}\}$ would select the nodes
	$S'_{k-\theta}=\{v_{\theta+1},\dots,v_{k}\}$. Hence, $S_{k} = S_{\theta}\cup
	S'_{k-\theta}$.
	\hfill{}
\end{proof}

\begin{prop}
	\label{claim:dec}
	Given any instance of \tumpg,
	\begin{align}
		S_{n-k}= S_{n-\theta} \cup S'_{n-k},\ \forall 1\leq k\leq n,\, 1\leq \theta \leq k
	\end{align}
	where $S_{n-k}$ is the set of nodes retained (corresponding to the
	upgraded base-stations) by applying \decg($k$) on $V$, and $S_{n-\theta}$,
	$S'_{n-k}$ are the sets of nodes retained by two consecutive operations
	of \decg($\theta$) on $V$, and \decg($k-\theta$) on $V \setminus
	S_{n-\theta}$ respectively.
\end{prop}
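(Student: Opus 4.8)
The plan is to mirror the argument for \incg in Proposition~\ref{claim:inc}, but working with the \emph{removal} order of \decg rather than its selection order. The crucial structural fact, already noted in the text, is that \decg deletes exactly one node per iteration and that its choice criterion (minimal marginal loss, ties broken first by smallest bottleneck-weight $\omega_i$ and then by lowest index) never references the budget $k$. Hence, if we imagine running \decg all the way down to the empty set, it deletes the nodes of $V$ in one fixed sequence $u_1, u_2, \dots, u_n$, and \decg($k$) is simply this process stopped after $n-k$ deletions. Consequently $S_{n-k} = \{u_{n-k+1},\dots,u_n\}$ and $S_{n-\theta} = \{u_{n-\theta+1},\dots,u_n\}$; since $\theta \le k$ we have $n-\theta \ge n-k$, so $S_{n-\theta} \subseteq S_{n-k}$ and $V \setminus S_{n-\theta} = \{u_1,\dots,u_{n-\theta}\}$.

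With this notation the proposition reduces to a single claim: \decg($k-\theta$) applied to the ground set $V \setminus S_{n-\theta} = \{u_1,\dots,u_{n-\theta}\}$, while treating the already-committed set $S_{n-\theta}$ as upgraded (the natural reading for incremental deployment, and exactly what makes the \incg argument go through as well), deletes precisely $u_1,\dots,u_{n-k}$ in that order and therefore retains $S'_{n-k} = \{u_{n-k+1},\dots,u_{n-\theta}\} = S_{n-k}\setminus S_{n-\theta}$, which yields $S_{n-\theta}\cup S'_{n-k} = S_{n-k}$ as required.

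I would prove this claim by induction on the iteration index $t$, maintaining the invariant that the \emph{effective upgraded state} of the second run --- its current candidate nodes together with the fixed $S_{n-\theta}$ --- coincides with the current set $S_t = V\setminus\{u_1,\dots,u_t\}$ of the full run. For the base case $t=0$ the effective state is $\{u_1,\dots,u_{n-\theta}\}\cup S_{n-\theta} = V = S_0$. For the inductive step, since the two states are identical, every trajectory utility $W_j$, hence every marginal loss, hence every trajectory-infeasibility (pruning) decision, is the same in both runs; and because the tie-break keys $\omega_i$ and the node indices are fixed input quantities independent of the run, ties resolve identically. The full run deletes $u_{t+1}$ at this step, so $u_{t+1}$ attains the minimal marginal loss and wins the deterministic tie-break over \emph{all} of $S_t$.

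The one point that needs care --- and the only place the two runs are not literally the same --- is that the second run may delete only from the restricted candidate set $\{u_{t+1},\dots,u_{n-\theta}\}$, never from $S_{n-\theta}$. This is exactly where $\theta \le k$ is used: for every step with $t+1 \le n-k \le n-\theta$ the node $u_{t+1}$ still lies in this candidate set, and since it already beat the (possibly larger) pool $S_t$ in the full run, it a fortiori beats the smaller present subset. Thus the second run also deletes $u_{t+1}$, restoring the invariant and closing the induction. I expect this candidate-restriction-versus-tie-break interaction to be the main obstacle; once the effective-state invariant is set up correctly it reduces to the elementary observation that discarding losing competitors cannot change the winner of a deterministic minimum.
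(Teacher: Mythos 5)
Your proposal is correct and follows essentially the same route as the paper's proof: both rest on the observation that \decg's deletion criterion and deterministic tie-breaking never reference the budget $k$, so there is a single canonical pruning order $u_1,\dots,u_n$ of which \decg($k$) is simply a prefix of length $n-k$, and the two-stage run reproduces the same prefix. The only difference is one of rigor --- the paper asserts the "consistent order" property and immediately reads off the sets, whereas you supply the inductive effective-state invariant (and the interpretation that $S_{n-\theta}$ remains upgraded during the second run) that the paper leaves implicit.
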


\begin{proof}
	Observe that the algorithm always prunes the nodes in $V$ in a consistent
	order. This follows from the fact that the algorithm breaks all ties between
	the nodes deterministically (if needed, using the index of nodes).
	Therefore, without loss of generality, assume that \decg($0$) applied on the
	node set $V$ prunes the nodes in the order $v_1,\dots,v_n$.  Hence,
	applying \decg($k$) on $V$ would prune the nodes $v_1,v_2,\dots,v_{n-k}$
	and return $S_{n-k}=\{v_{n-k+1},\dots,v_n\}$.  For $\theta \leq k$,
	$S_{n-\theta}=\{v_{n-\theta+1},$ $\dots,v_n\} \subseteq S_{n-k}$.  Applying
	\decg($k-\theta$) on the set $V \setminus S_{n-\theta} =
	\{v_1,\dots,v_{n-\theta}\}$ would prune the nodes $v_1,\dots,v_{n-k}$ and
	return $S'_{n-k}=\{v_{n-k+1},\dots,v_{n-\theta}\}$.  Hence,
	$S_{n-k}=S_{n-\theta} \cup S^\prime_{n-k}$.
	\hfill{}
\end{proof}
}
{
\comment{
\begin{prop}
Let $k=k_1+k_2+\dots+k_t$ such that $k_r >0$ for all $1 \le r \le t$, and $t \ge 2$. Then for any instance of \tumpg, \incg (respectively \decg) with budget parameter $k$ would upgrade the same set of base-stations, as successive applications of \incg (respectively \decg) with budget parameters $k_1,k_2,\dots,k_t$, respectively. 
\end{prop}
\textsc{Proof:} The proof is available at \cite{FullTUMP}.
}
}
\longv
{
\subsection*{Summary of the Algorithms}
A summary of analysis of the proposed greedy algorithms is presented in Table~\ref{tab:summary of analysis}. While \simg, and \incg have lower time complexities than \decg, but they do not offer any bounded approximation. \decg, on the other hand, offers a bounded approximation with the trade-off of higher running time. 
}
{}

\begin{table}[t]\scriptsize
		\caption{Summary of Analysis of Algorithms for \tumpg}
		\label{tab:summary of analysis}
	\begin{center}
		\begin{tabular}{|c||c|c|c|}
			\hline
		Algorithms	 & Time Complexity & Space Complexity & Approx. Bound\\
\hline	
\simg & $O(m\dmax +n\log k)$ & $O(m\dmax)$ &	0\\ 
			\hline
	\incg & $O(m\dmax^2 + kn)$ & $O(m\dmax)$ & 0 \\
			\hline
	\decg & $O(m\dmax^2 + (n-k)n)$ & $O(m\dmax)$ &  ${\binom{k}{\dmax}} /	{\binom{n}{\dmax}}$ \\				
			\hline			
		\end{tabular}
	\end{center}
\end{table}

\section{Evaluation Methodology}
\label{sec:methodology}

We show the efficacy of the algorithms using real as well as synthetic but realistic datasets. 
Our formulation considers two critical parameters whose values are best decided by the operator: the budget ($k$), and the threshold $\gamma$ (which signifies the desired satisfaction level on the trajectory that the operator is targeting). The input trajectories are generally chosen by the cellular operator based on the target subset of subscribers (e.g., based on focused micro-segments such as long-commuting 3G subscribers). The trajectories of subscribers are readily available to operators in Call Detail Records (CDRs) and deep-packet inspection logs~\cite{tdrs}.

\subsection{City and Network Generator (CING)}
\label{sec:cityModel}
\longv{
Real data about trajectories of subscribers is generally not publicly available. Hence, we designed \textit{City and Network Generator} (CING) tool to generate representative traces of population distribution, mobility and network topology.
We evaluate our protocols under such synthetic data and, also, on one real trace data set~\cite{realityMining}; we describe these data sets later.
CING models both the city population and network deployment by considering: (1) how users travel in a city, and (2) how base-stations (BS) are deployed. We now briefly explain these aspects.
}
{
Real data about trajectories of subscribers is generally not publicly available. Hence, we designed \textit{City and Network Generator} (CING) tool to generate representative traces of population distribution, mobility and network topology.
CING models both the city population and network deployment.
}

\noindent \textbf{City Model:}
\longv
{
\begin{figure}
\begin{center}
\includegraphics[width=4cm]{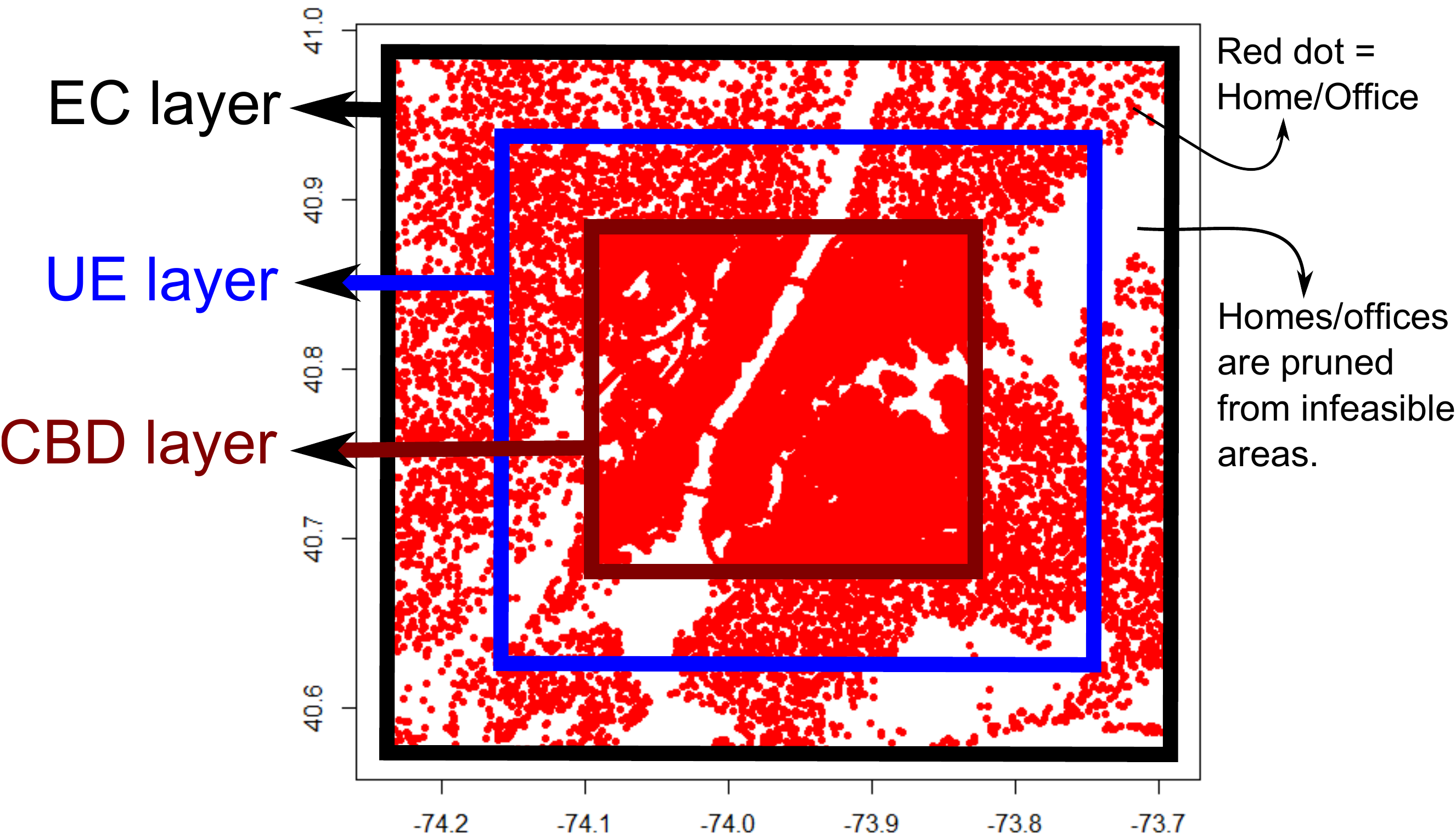}
\end{center}
\caption{CING City Model: Example of a virtual NYC}
\label{fig:cingCity}
\end{figure}

We model a virtual city based on the existing map-data and the observed spatial distribution of homes and offices.
We employ a concentric city model, where the city is divided into a few layers~\cite{beyondEdgeless}. The \textit{Central Business District} (CBD) constitutes the city-center, and usually has high office density. The \textit{Urban Envelopes} (UE) are around the CBD. \textit{Edge Cities} (EC) surrounds the main city's UE.

CING reads parameters such as dimensions and office/residential densities in various layers of the city, and creates synthetic user homes and offices. Lang \text{et. al.} have reported these information for thirteen popular US cities~\cite{beyondEdgeless}. However, only some regions of the city has been surveyed. Since we are interested in movement of subscribers in a city, directly using the partial city information leads to biased trajectories. Hence, we extrapolate the parameters reported to the entire city dimension.
City dimension is computed by latitude and longitude information the outer-most layer of the city that is surveyed (the EC layer). We then construct concentic rectangles of CBD, SD, UE and EC layers for each city as shown in Fig.~\ref{fig:cingCity}.

The areas of the layers are scaled proportionally to match the overall area. We distribute homes/offices in each layer according to the observed spatial densities~\cite{beyondEdgeless}. 
Using map data, we then prune the locations that are in inaccessible regions, such as locations within the rivers (see Fig.~\ref{fig:cingCity}).

Currently, we associate home and office as \textit{User hangouts}. A user hangout is represented by a spatial location and most frequent user's arrival and departure times. We artificially populate the home and office hangouts by observing weekday commute patterns of the users.; for example, by observing that the user leaves home at a random time from \unit{7}{AM} to \unit{11}{PM} to commute to office. If the spatio-temporal hangout information is available (e.g., as a result of mining real profile data, social network data or CDR data), the data can hence also be provided as an input to CING tool rather than generating virtual locations.
}
{
We employ a concentric city model, where the city is divided into a few layers~\cite{beyondEdgeless}. The \textit{Central Business District} (CBD) constitutes the city-center, and usually has high office density. The \textit{Urban Envelopes} (UE) are around the CBD. \textit{Edge Cities} (EC) surrounds the main city's UE. We distribute homes/offices in each layer according to the observed spatial densities~\cite{beyondEdgeless}. 
}
\noindent\textbf{Network Model:}
CING generates base-stations (BS) such that the number of BS in a region is proportional to the number of homes and offices. Our network consists of 82\% 2G BS, based on the deployment statistics of a major Indian cellular operator~\cite{airtel}. We mark 20\% of the BS as congested. Based on our measurement observations (Section~\ref{sec:measurement}), we randomly choose per-user throughput within $[20,80]$, $[50,150]$, $[20,400]$ and $[300,2000]$ kbps for congested 2G, non-congested 2G, congested 3G and non-congested 3G BS respectively.

\noindent\textbf{Trajectory Model:}

The trajectory properties such as the path taken from origin to destination affects the user's experience. CING derives the road-network graph from the OSM map-data~\cite{osm}. It computes the trajectories of the user by simulating user movement on road between home to office (using the shortest path algorithm). For each road trajectory thus computed, we compute the sequence of base-stations by assuming a hand-off policy where a user is connected to the nearest base-station at any location. The time connected to a base-station based on the road-speeds and BS coverage. The throughput of the user is dependent upon the per-user throughput of the BS and the duration of association.

\subsection{Data sets}
\label{sec:data sets}
We evaluate the performance of the proposed algorithms using both real and synthetic data.

\noindent\emph{a) Data set from real traces:} The Reality Mining (RM) data set lists the base-station handoffs of more than $100$ users~\cite{realityMining}. Most of the users belong to MIT and, hence, the set is biased. 

\longv
{
We use this data set to study optimizing trajectories of targeted subscribers with similar spatial hangouts and interests. We extract the sequence of base-station IDs that a user connects. We break the sequence of base-stations into trajectories where the user might have moved; we break a trajectory when there is no change in base-station for more than $30$ minutes. We prune the data to include mobile trajectories of users which have more than 10 base-stations. We also remove trajectories with sequence of base-stations showing a ping-pong handoff pattern~\cite{Juang2005}. The data set has 3819 trajectories and 17975 base-stations.

We first extract the trajectories by assuming that the a trajectory starts if the user is previously static for more than 30 minutes. The data also contains noisy trajectories, primariy because of: (1) base-station IDs which are out of the city are recorded, and (2) handoffs that often cause \textit{ping-pong} effect where the device swaps between a set of base-stations even when the user is not moving~\cite{Juang2005}. Such outlier base-stations and trajectories are pruned; we remove 1000 base-stations that have the least number of trajectories passing through them, and only consider trajectories that have a length of greater than 10.


We finally get $3819$ trajectories and $17975$ base-stations. Note that there is no base-station location information in this data; however, ID suffices for our evaluation.
}
{}

\begin{figure}
\begin{center}
\includegraphics[width=3.95cm]{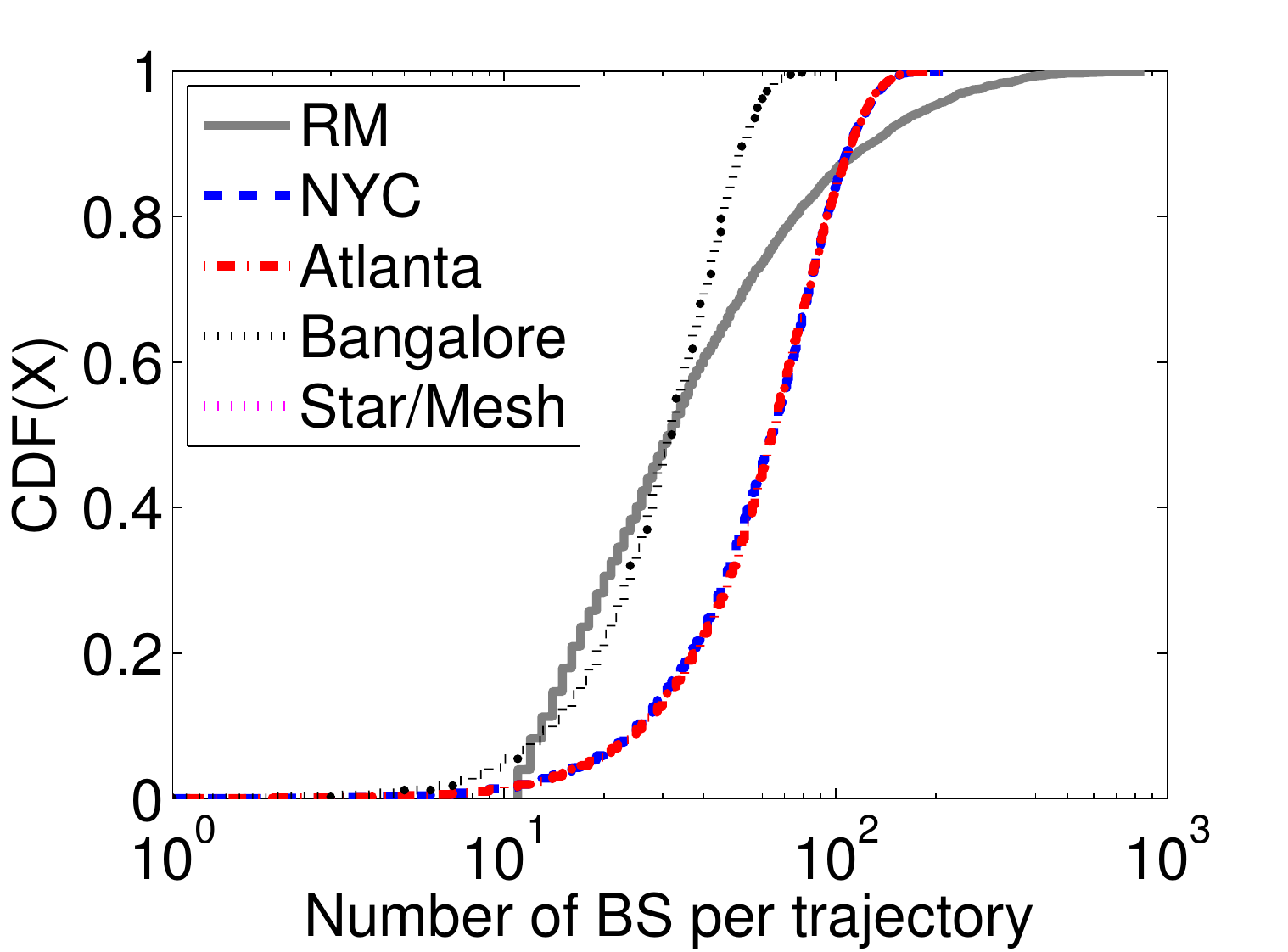}
\includegraphics[width=3.95cm]{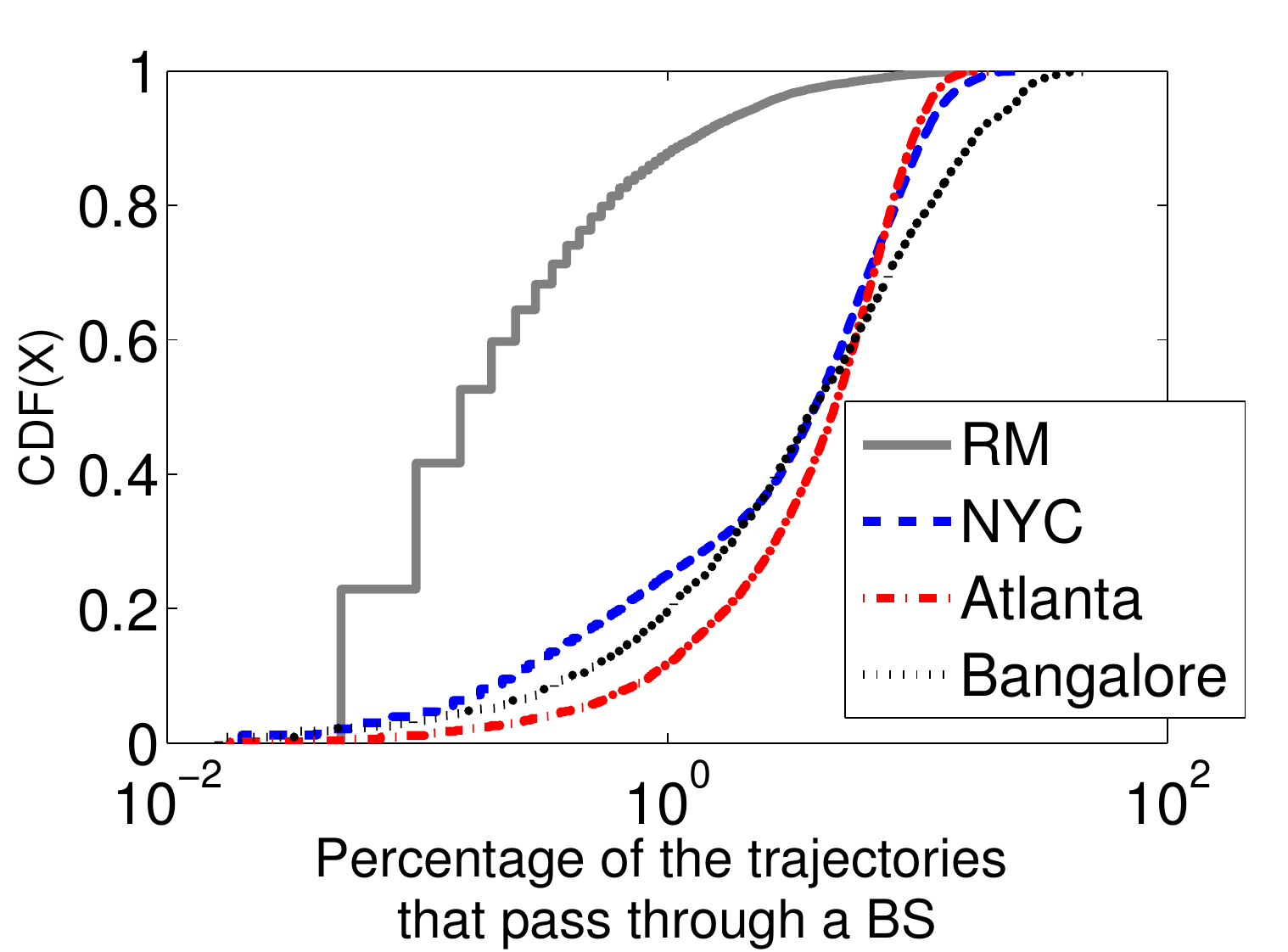}
\end{center}
\caption{Distribution of trajectory length and degree of a base-station: The degree of the base-station is normalized to the number of trajectories present in the scenario.}
\label{fig:dataDist}
\end{figure}
RM has 3,819 trajectories and 17,975 base-stations. Fig.~\ref{fig:dataDist} shows
the CDF of the trajectory length and the degree of a base-station (normalized to
the percentage of trajectories that pass through a base-station). The data shows
a highly skewed distribution, with many base-stations having very low degree
(when compared to the city movement in pure synthetic city data). We conjecture
that this is due to the biased set of users; many trajectories often visit a very few base-stations near MIT.

\begin{table*}\scriptsize
\caption{Synthetic data sets: We also evaluate Mesh topology and real traces from Reality Mining (RM) data set.}
\label{tab:data}
\begin{tabular}{|c|c|c|c|c|c|c|}
\hline
Star&\multicolumn{2}{|c|}{Bangalore}&\multicolumn{2}{|c|}{Atlanta}&\multicolumn{2}{|c|}{New York (NYC)}\\
\hline
\parbox[c]{2cm}{\includegraphics[width=1.8cm]{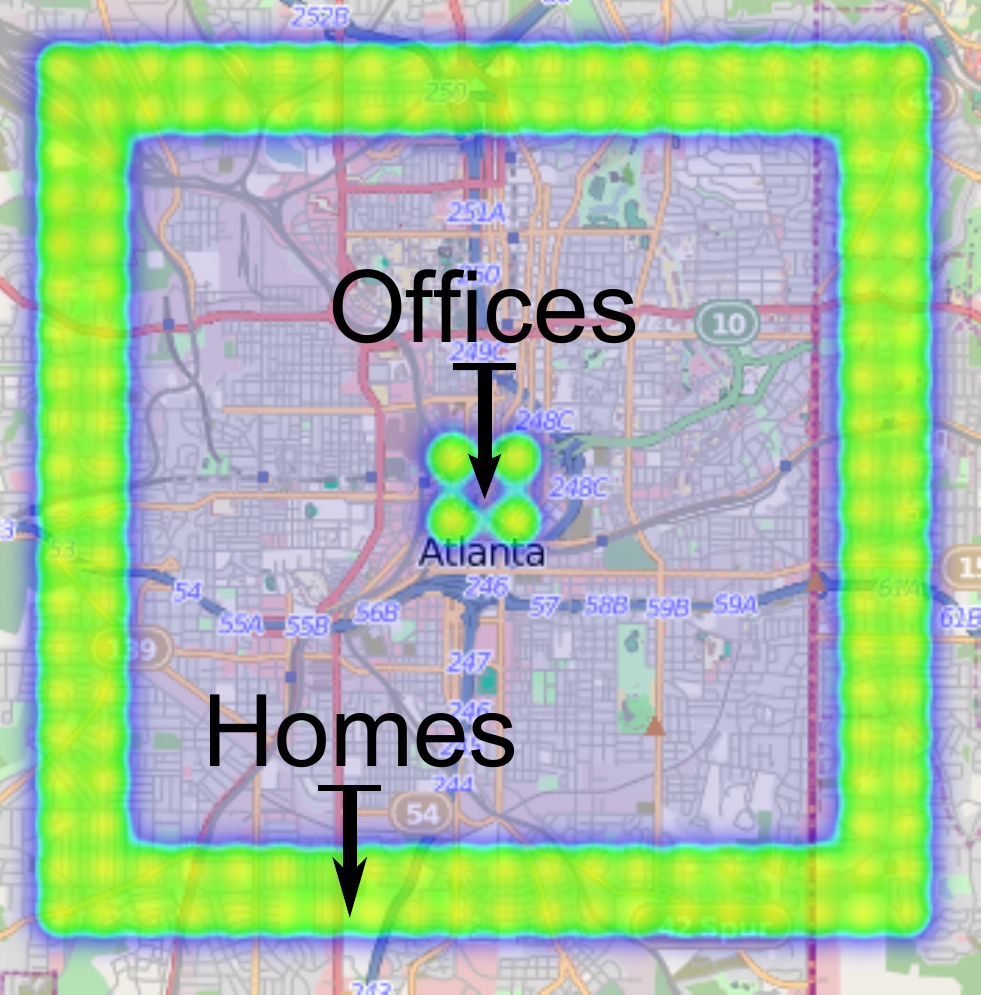}}&\parbox[c]{2cm}{\includegraphics[width=1.8cm]{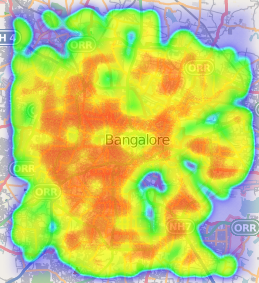}}&\parbox[c]{2cm}{\includegraphics[width=1.8cm]{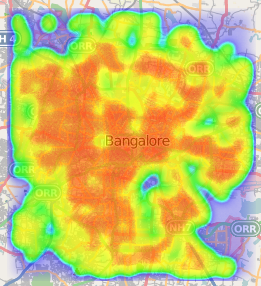}}&\parbox[c]{2cm}{\includegraphics[width=2cm]{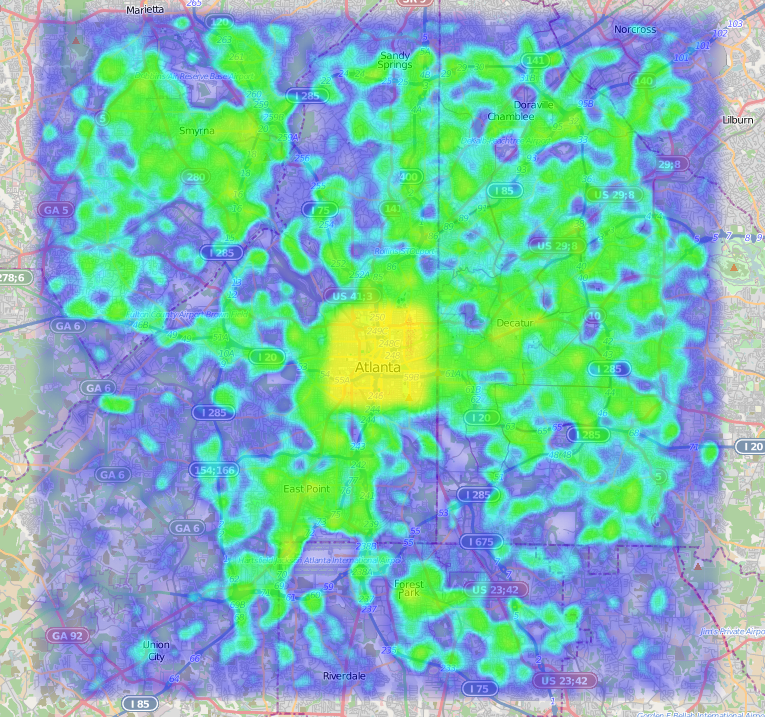}}&\parbox[c]{2cm}{\includegraphics[width=2cm]{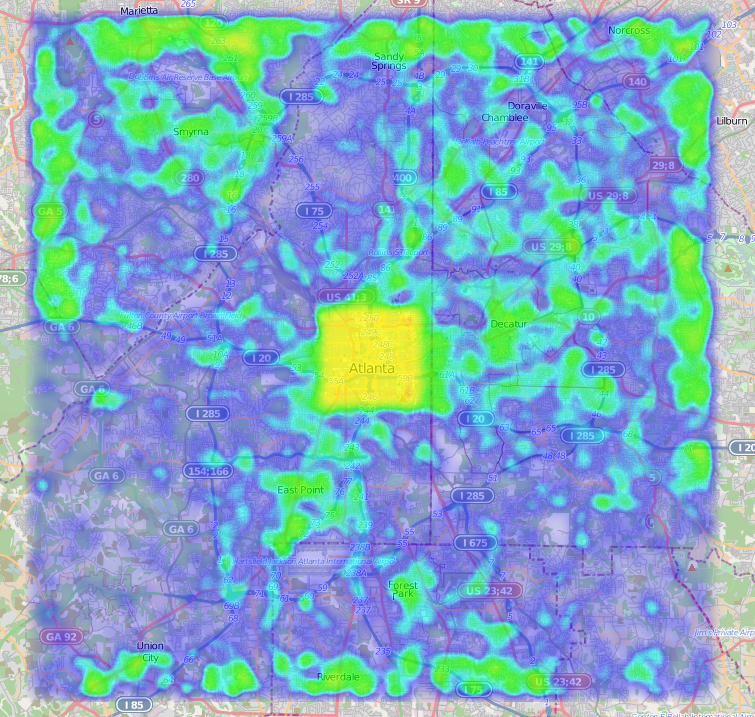}}&\parbox[c]{2cm}{\includegraphics[width=1.8cm]{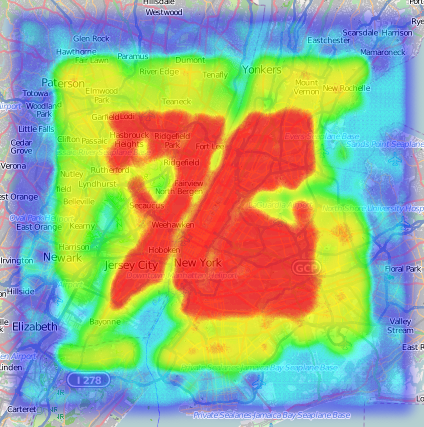}}&\parbox[c]{2cm}{\includegraphics[width=1.8cm]{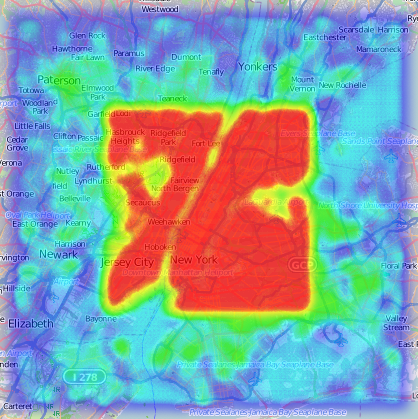}}\\
Home and Office&Home&Office&Home&Office&Home&Office\\
\unit{$10\times14$}{$\text{km}^2$}&\multicolumn{2}{|c|}{\unit{$18\times18$}{$\text{km}^2$}}&\multicolumn{2}{|c|}{\unit{$42\times61$}{$\text{km}^2$}}&\multicolumn{2}{|c|}{\unit{$46\times64$}{$\text{km}^2$}}\\
\unit{$5$k}{traj}; \unit{$430$}{BS}&\multicolumn{2}{|c|}{\unit{$25$k}{traj}; \unit{$1,894$}{BS}}&\multicolumn{2}{|c|}{\unit{$50$k}{traj}; \unit{$11213$}{BS}}&\multicolumn{2}{|c|}{\unit{$50$k}{traj}; \unit{$13860$}{BS}}\\
\hline
\end{tabular}
\end{table*}

\noindent \emph{b) Synthetic data sets:} We generate three classes of datasets, as summarized in Table~\ref{tab:data}.
\\\noindent 1. \textbf{Star} and \textbf{Mesh}: These topologies have artificially generated distribution to highlight upgrades in extreme cases of population distribution. Star has a dense CBD with offices, and users commute from their home in a thin UE layer. Mesh indicates a large city where people are equally likely to move in any direction. Here, trajectories are randomly assigned to base-stations with trajectory length distribution similar to Star.
\\\noindent 2. \textbf{New York City (NYC)} and \textbf{Atlanta}: We generate two representative large US cities of differing population distributions~\cite{beyondEdgeless}. 
NYC has a large CBD with dense offices, while Atlanta's CBD is small and relatively sparse. Hence, NYC dataset consists of more concentric trajectories, and Atlanta's trajectories are spread out.
\\\noindent 3. \textbf{Bangalore}: We simulate an Indian city, where the population distribution is different than USA. Offices are concentrated in business areas, and homes are spread out across city.

Fig.~\ref{fig:dataDist} shows the trajectory length and percent of trajectories incident on base-stations in different data sets. When compared to the RM data set, the synthetic set has higher degree of trajectories incident on each base-station since we consider people moving from all parts of the city towards their respective offices.

\section{Results}
\label{sec:results}
We measure the effectiveness of the algorithms by computing the percentage of trajectories that are \textit{$\gamma$-bottleneck-free}, which is an indication of the percentage of mobile users who will be satisfied by the upgrade. We first analyze small-scale regions in the city to understand the 
effect of model parameters, and then we show the improvements in large-scale city-wide scenarios.

\subsection{Comparison with the Optimal}
\label{sec:optComp}

\begin{figure}
\begin{center}
\subfigure[Performance of \decg is close to optimal
~\label{fig:opt_perf}]{\includegraphics[width=4cm]
{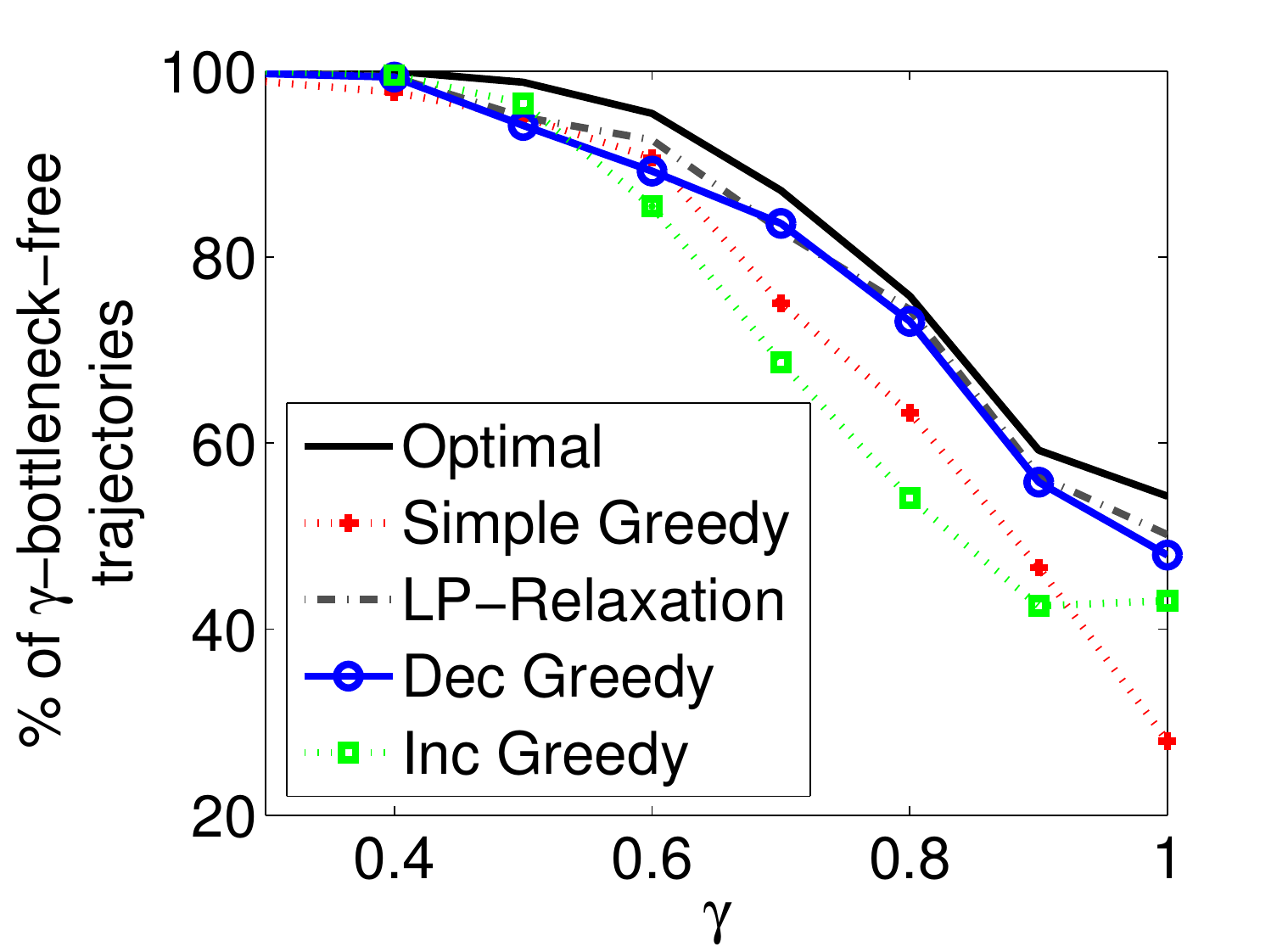}}\quad
\subfigure[Time comparison: Our algorithms are 3-4 orders faster
~\label{fig:opt_time}]{\includegraphics[width=4cm]
{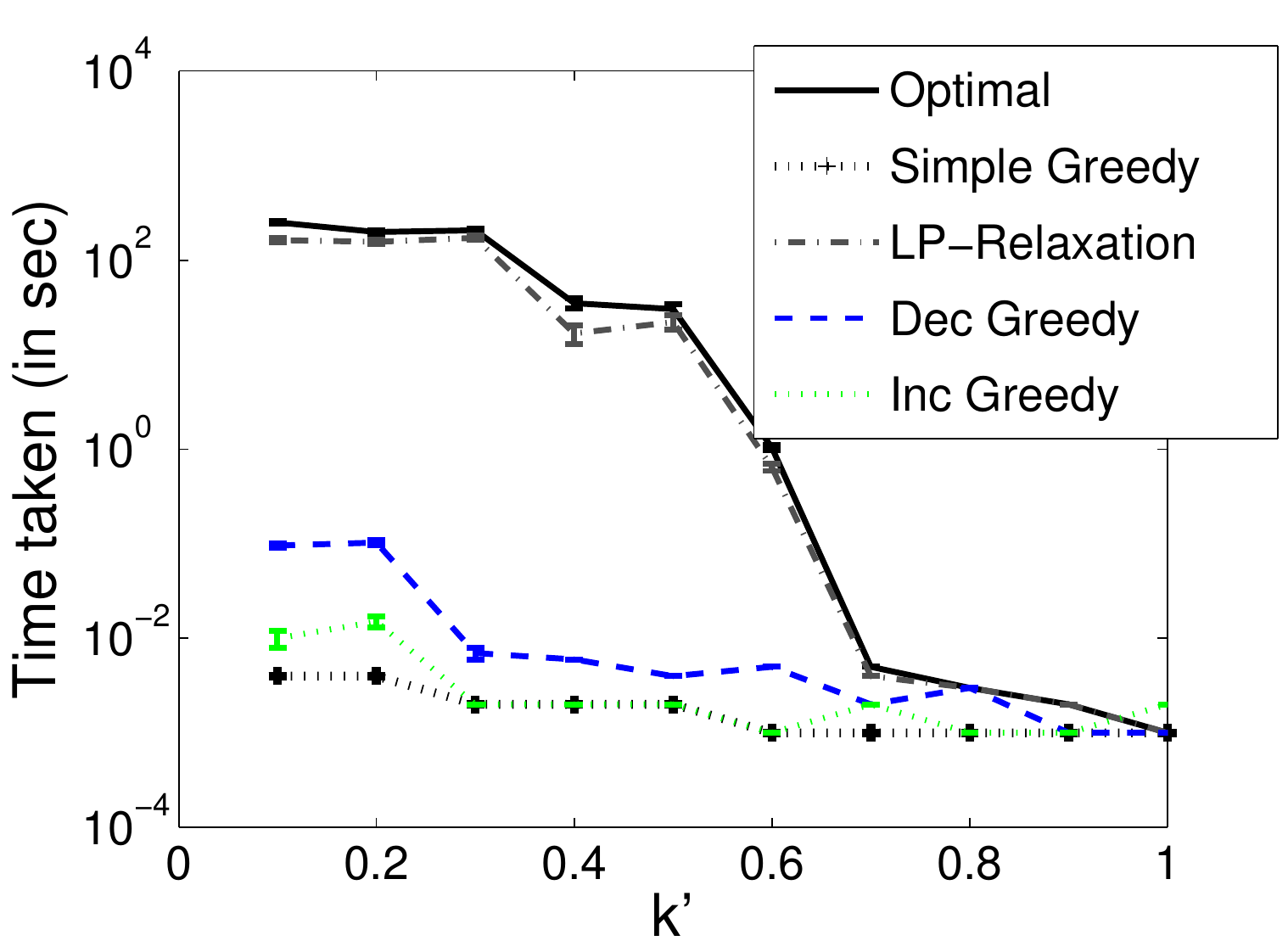}}
\end{center}
\caption{Comparison of different schemes w.r.t. Optimal}
\label{fig:optcomp}
\end{figure}
We extracted the trajectories in a miniature \unit{2x2}{km$^2$} area from Bangalore 
dataset.
Since the problem is NP-hard, the optimal solution does not scale 
up for larger city networks.
Therefore,
we compared the performance of our algorithms against the optimal solution 
on this small dataset.
This region had 1405 trajectories and 30 base-stations, out of which we
chose to optimize 9 base-stations. Since each dataset has different
number of base-stations, we analyze the results with the fraction of BS to be 
upgraded, $k'=\frac{k}{n}$. In the above scenario, $k' = \frac{9}{30} = 0.3$.

We also compare with LP-Relaxation based approach~\cite{lprelax}, which was
discussed in Section~\ref{sec:algo}. 
Fig.~\ref{fig:opt_perf} compares the performance of the algorithms at different $\gamma$'s,
and Fig.~\ref{fig:opt_time} shows the run-time for $\gamma=1$. Our algorithms,
especially \decg, provide similar performances as optimal and LP-Relaxation, 
while running 3-4 orders of magnitude faster. 
\simg fails to provide good user experience at higher $\gamma$.

\incg is also closer to optimal; however the selfish choices of the 
algorithm during initial iterations results in a lower performance. We explain the 
reasons in more detail later.

\subsection{Effect of budget allocation, QoS and trajectory lengths}
We choose a \unit{5x5}{km$^2$} area in Bangalore to demonstrate the detailed effects of basic parameters; a larger scale city analysis is later discussed in Section~\ref{sec:largescale}. This area has 23,000 trajectories with 107 BS. There are 4240, 5420, 8470 and 5430 trajectories with lengths between $[1,5]$,$[6,10]$, $[11,15]$ and $[16,20]$, respectively. 
\begin{figure}
\begin{center}
\subfigure[Varying budget (k'); $\gamma=1$.
~\label{fig:trajLth_k}]{\includegraphics[width=4cm]
{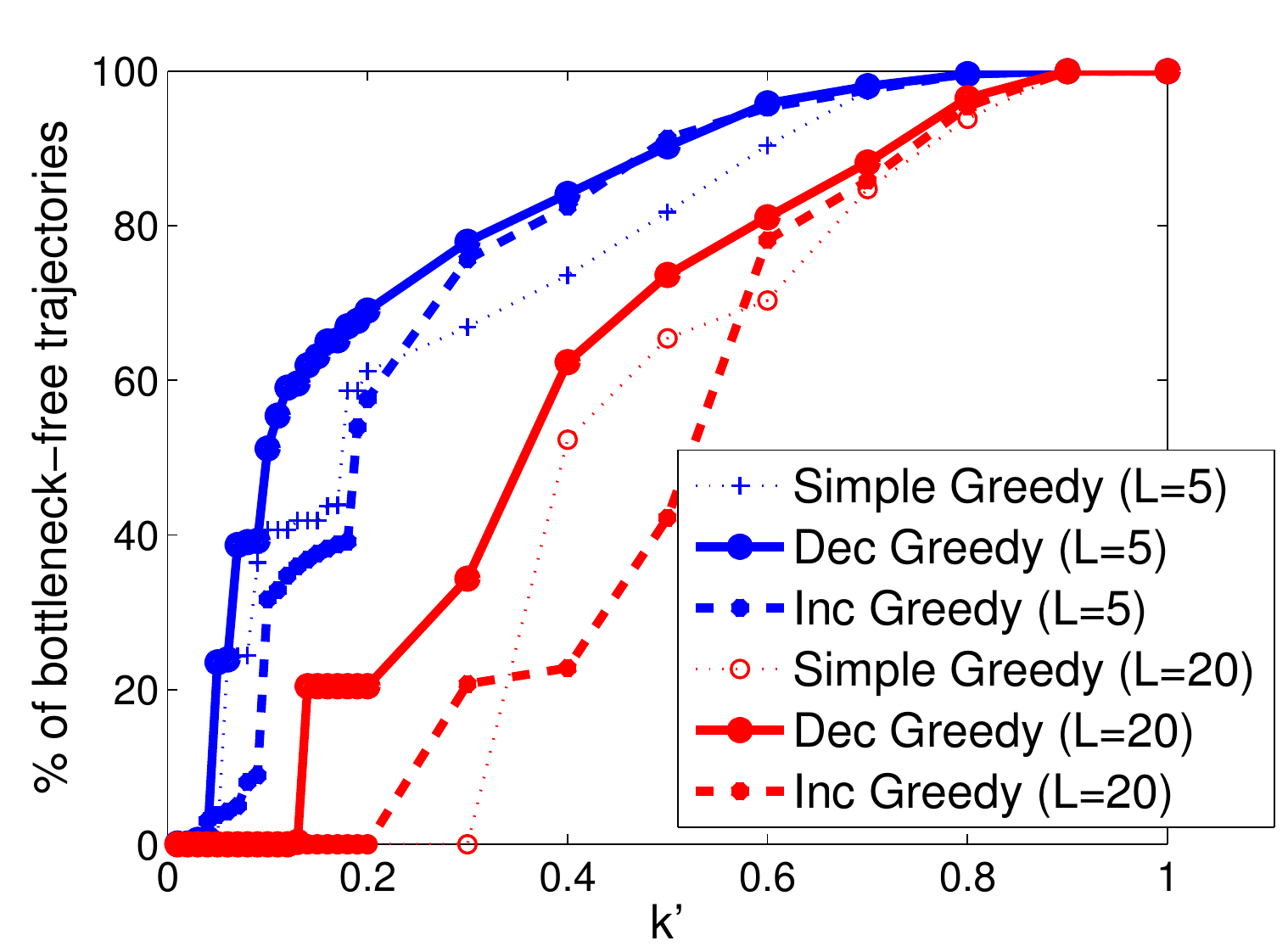}}\quad
\subfigure[Relaxing QoS ($\gamma$); $k'=0.2$
~\label{fig:trajLth_gamma}]{\includegraphics[width=4cm]
{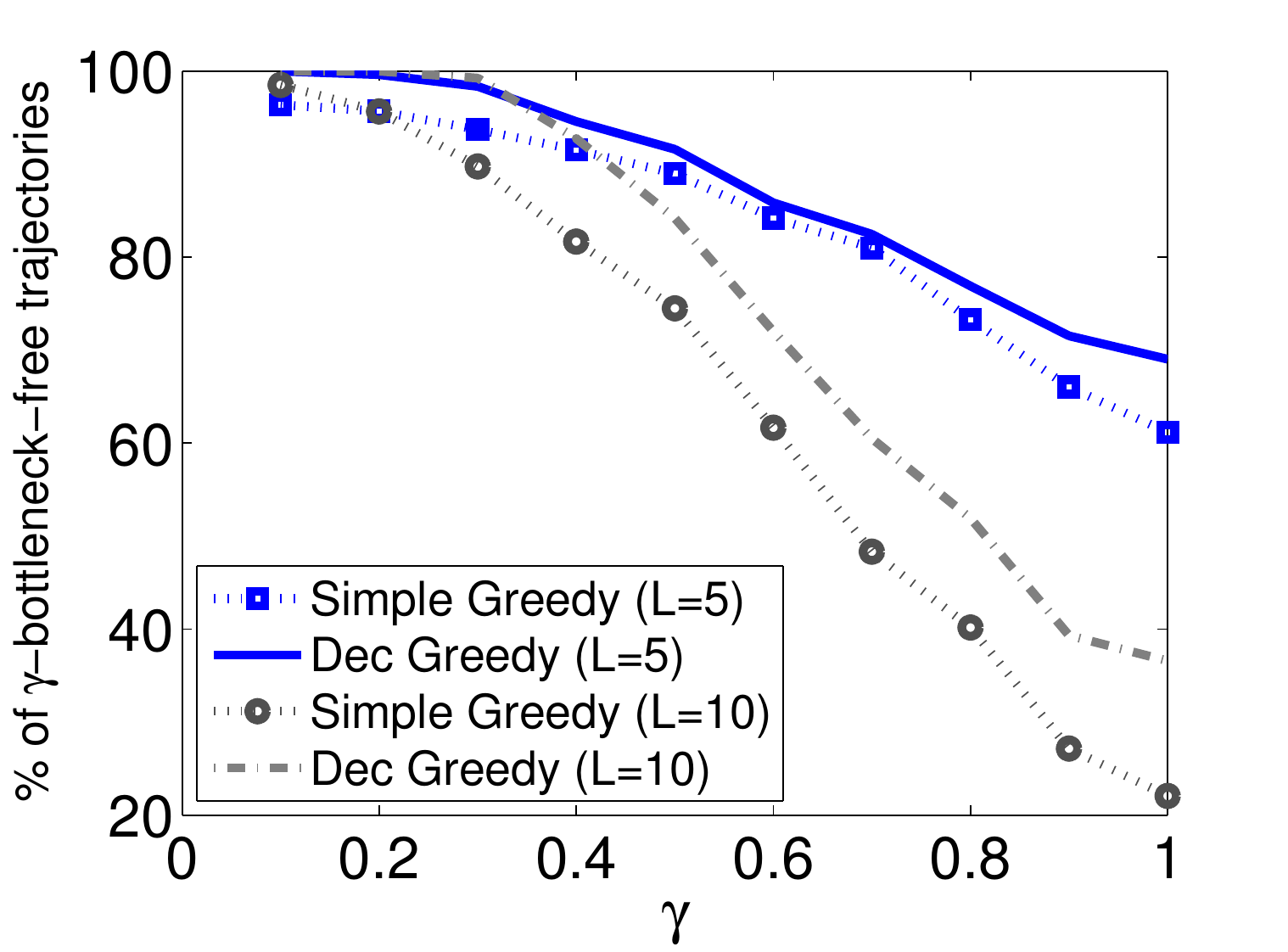}}
\end{center}
\caption{Effect of trajectory length (L), $k'$ and $\gamma$: Our algorithms are well-suited for network upgrades when stricter QoS has to be delivered over longer trajectories with low budgets}
\label{fig:trajLths}
\end{figure}
\\\noindent\textbf{1. Effect of budget allocation: }
Operators incrementally allocate small budgets for upgrades. For example, $k' \le 0.2$ is a representative annual budget of a major Telco~\cite{airtel}. Fig.~\ref{fig:trajLth_k} shows the effect of altering the budget ($k'$) when the operator chooses to ensure strict QoS to the mobile user ($\gamma=1$). \decg consistently outperforms \simg across different budgets. Note that \decg can be repeatedly applied in increments as more budget becomes available. Hence, the operator can consistently provide better performance than \simg as the network evolves. 

\incg yields better results than \simg at low $k'$. However, its performance is worse than \simg at intermediate $k'$-values. This happens as, at each iteration, \incg chooses to upgrade the BS that provides immediate benefits; the benefit lies in converting a bottleneck trajectory to $\gamma$-bottleneck-free. This short-term focus on immediate satisfaction biases \incg to miss out on choosing BSes that are beneficial to a larger number of trajectories which could have been possibly identified in later iterations. 
\\\noindent\textbf{2. Effect of trajectory lengths: }
Fig.~\ref{fig:trajLth_k} shows the effect of optimizing experiences along short and long trajectories. If the operator chooses to upgrade short trajectories, \simg approach provides comparable performance as \decg. However, if the long-commuting trajectories are to be optimized, \decg approach provides substantial gains, especially for small budgets. 

The \incg algorithm starts providing better gains for lower $k'$ values than \simg. However, \decg consistently outperforms the other two algorithms across different trajectory lengths. 
\\\noindent\textbf{3. Strictness of QoS: }
Fig.~\ref{fig:trajLth_gamma} shows the effect of relaxing the strictness of the QoS provided to the mobile user (reducing the value of $\gamma$).  For smaller 
$\gamma$ or for shorter trajectories, \simg scheme is comparable to \decg; there are a few highly visited BS on the trajectory, which are candidates for optimizing in \simg approach, which also renders the trajectory $\gamma$-bottleneck-free. However, when stricter user experience is needed over longer trajectories, \decg provides significant benefits over \simg.
\comment {
\\\noindent\textbf{4. Varying throughput thresholds: }

A base-station is bottle-necked for a trajectory if the throughput attained is less than a threshold ($\tau$). We varied the thresholds to \unit{400}{kbps}, \unit{750}{kbps}, \unit{1}{Mbps} and \unit{1.5}{Mbps}. Our results indicate that \decg provides consistently better improvement over \simg (similar to Fig.~\ref{fig:trajLths}) in all realistic cases of thresholds. }
\comment{At larger values of thresholds \decg significantly outperforms \simg since increasing the thresholds yields larger number of bottle-necks -- a scenario which \decg is tuned to solve. Also, as the thresholds are varied, the gap in the \decg performance is small when compared to the gap in the \simg. For example, for $\tau=$\unit{400}{kbps} and {1.5}{Mbps}, \decg optimizes around 20\% of the trajectories to be bottleneck free at $k'=.06$ and $k'=0.07$ budget . Whereas, \simg optimizes 20\% of trajectories when budget of $k'=0.14$ and $k'=0.3$ for optimizing for the above thresholds, respectively. 
}

\subsection{City-scale evaluation}
\label{sec:largescale}
\begin{figure}
\begin{center}
\subfigure[Reality Mining
~\label{fig:rm_k}]{\includegraphics[width=4cm]
{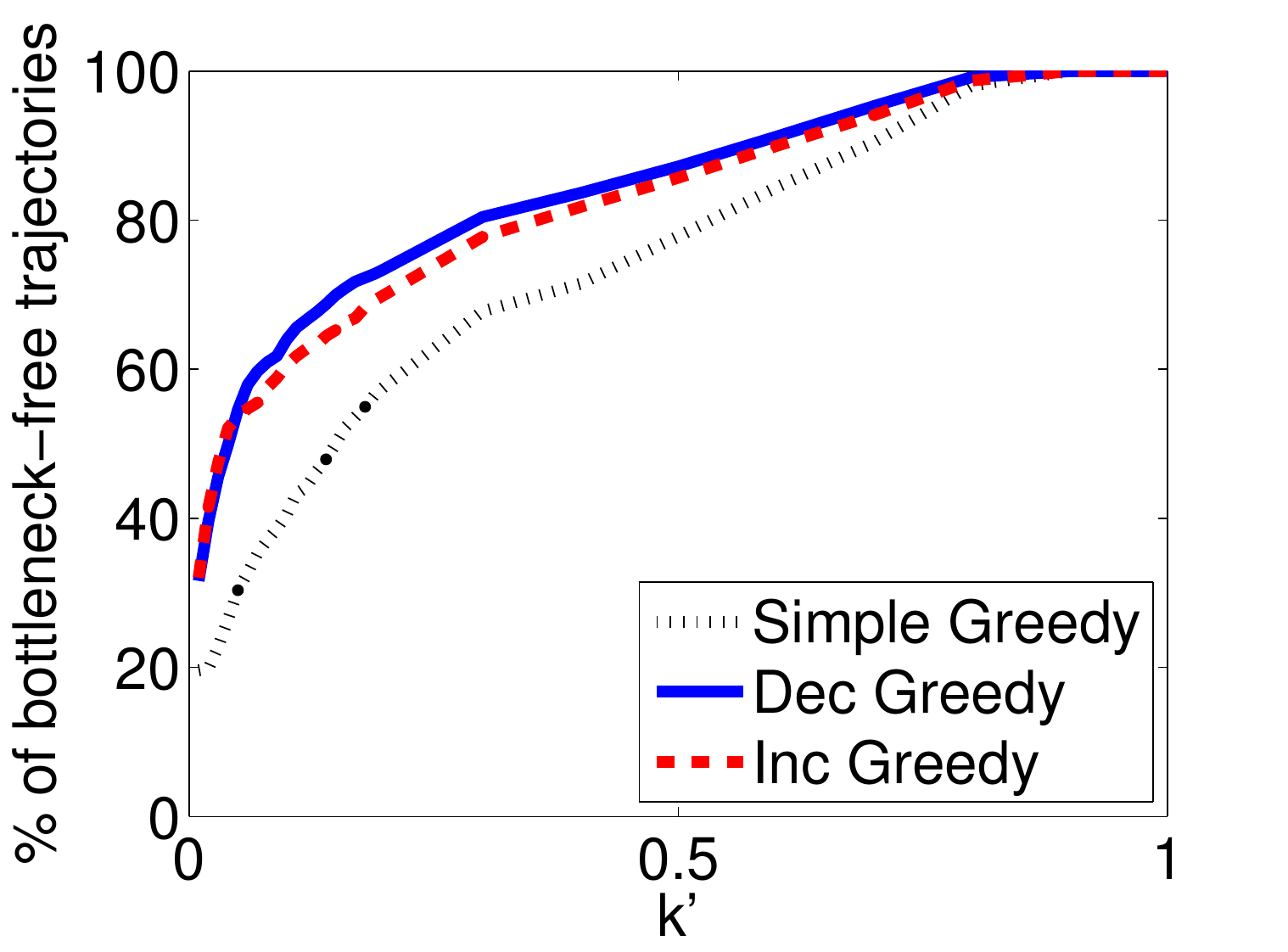}}\quad
\subfigure[Atlanta
~\label{fig:atlanta_k}]{\includegraphics[width=4cm]
{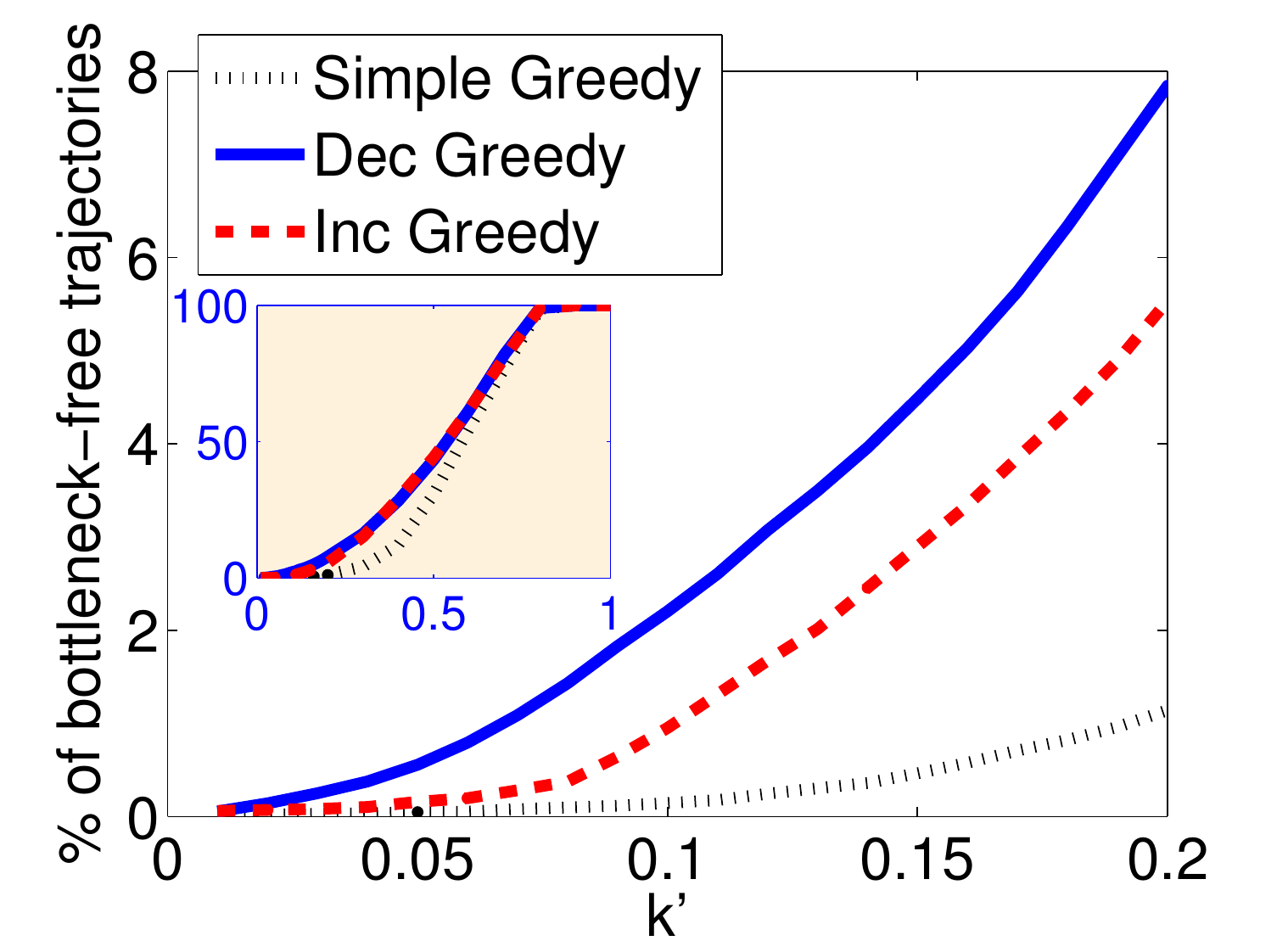}}
\end{center}
\caption{Performance of TUMP($\gamma=1$) in different scenarios: \decg and \incg consistently outperform \simg. With an upgrade budget $k'=0.2$, our algorithms perform 3-8 times better in improving the user quality of service on trajectories in different city geographies when compared to greedy location-based base-station upgrades.}
\label{fig:topoCities}
\end{figure}
Fig.~\ref{fig:topoCities} shows the performance of algorithms in different datasets. In the real RM scenario, \incg and \decg algorithms consistently provide more than 10-20\% improvement when the budget is medium to low. In this biased distribution, few BS (near MIT) are frequently visited by many trajectories. While \simg optimizes frequently accessed BS, it fails to optimize the  QoE of frequently accessed trajectories.

In the simulated datasets, \decg and \incg constantly outperform \simg. However, the improvement for low $k'$ is not as pronounced as in RM dataset since trajectories are spread across the entire city, and hence selecting relevant BS is hard. We also highlight the regions of $k' \le 0.2$ in Fig.~\ref{fig:topoCities}.

\begin{figure}
\begin{center}
\subfigure[Star-like \textit{vs.} Mesh-like~\label{fig:starMesh}]{\includegraphics[width=4cm]
{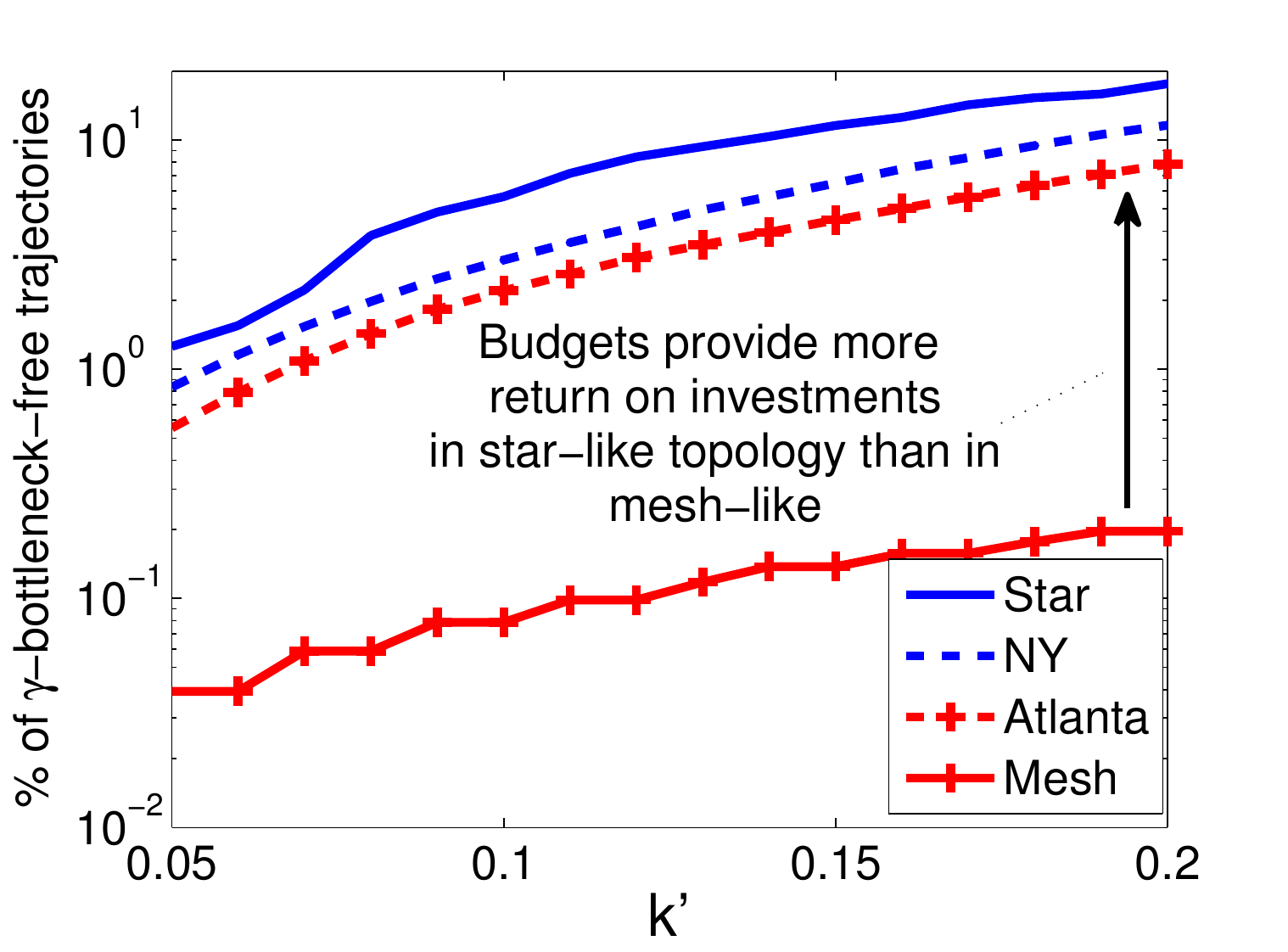}}
\subfigure[Runtime on Atlanta data-set
~\label{fig:atlantaTime}]{\includegraphics[width=4cm]
{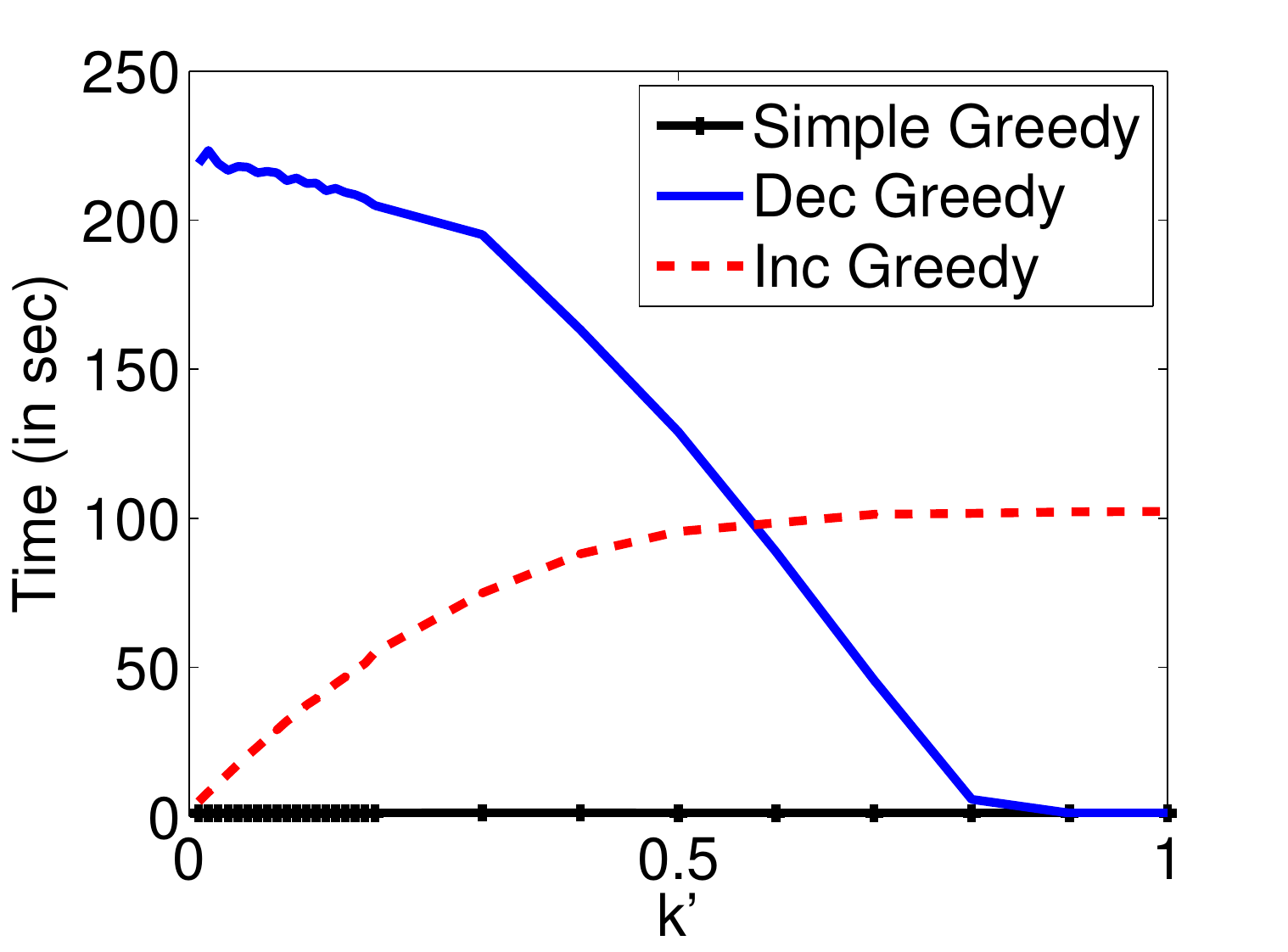}}
\end{center}
\caption{Impact of population and runtime: Star topologies provide more marginal returns for trajectory optimization. The runtime of \incg increases as $k'$ increases. However, \decg has a reverse trend.}
\label{fig:topoArtificial}
\end{figure}
\noindent\textbf{1. Implications of Population Distribution: }
We observed that NYC consistently provides higher marginal gains than Atlanta. This is because NYC has a high office density in the center, and trajectories form a star-like topology. Upgrading a few BS at the center, hence, benefits a large number of trajectories. In contrast, Atlanta's trajectories do not form a pronounced star-topology.

Fig.~\ref{fig:starMesh} compares the performance of \decg in: (1) artificial Star and Mesh topologies, and (2) realistic star-like NYC and mesh-like Atlanta,  to highlight the impact of population distribution.
Star-like topologies continuously provide significant improvement on small increases in budget. Unlike Star, Mesh requires a large budget ($k$) to provide good performance. For example, for a budget $k=0.2$, Star data-set provides two orders better improvement than Mesh.
Hence, while rolling out new upgrades in cities, the operator has to be cognizant that cost of providing better user experience in a star-like city is lower than in a mesh-like city.

\comment{
\paragraph{Impact of $k'$}
Fig.~\ref{fig:topoCities} shows that \decg and \incg provide better results than \simg in one-shot upgrades for most values of $k'$. In addition, as we show in Proposition~\ref{claim:inc}, our algorithms can be repeatedly applied in increments -- as the network gradually evolves -- to consistently provide better performance than \simg.

\paragraph{Impact of $\gamma$}
\begin{figure}
\subfigure[As $k'$ varies in Bangalore
~\label{fig:nycGamma}]{
\includegraphics[width=3.8cm]{Figures/blore_algs_difft_gamma.pdf}}
\subfigure[\decg in data-sets
~\label{fig:gammaDS}]{
\includegraphics[width=3.8cm]{Figures/nycAtlRM_algs_gammas.pdf}}
\caption{Percent of satisfied trajectories as $\gamma$ varies}
\label{fig:nycAtlantaGamma}
\end{figure}
Fig.~\ref{fig:nycAtlantaGamma} shows that as $\gamma$ decreases: (1) the number of bottleneck-free trajectories increase since $(1-\gamma)$ of the BS can be unoptimized for a $\gamma$-satisfied trajectory, and (2) the relative performance improvement of \decg and \incg algorithms observed over \simg decreases.
}

\noindent\textbf{2. Runtime comparison:}
Fig.~\ref{fig:atlantaTime} shows the runtime of different algorithms on the Atlanta data-set for $\gamma$=1. \decg and \incg are slower than \simg since both these algorithms are iterative. The running time of \incg increases as $k$ increases, whereas \decg has a reverse trend.

\section{Related Work}
\label{sec:related}


To the best of our knowledge, there is no literature that incorporates mobility 
pattern of the users for base-station upgrades. 
The related areas can be broadly classified into \emph{handoff protocols} and 
\emph{radio frequency (RF) planning}.  
\\\noindent(1) \textit{Handoff protocols:} These protocols improve 
connectivity by associating a device with a better base-station~\cite{Sgora2009}.  
They fail to improve experience if the base-stations are 
inherently limited in resources (e.g., 2G base-stations). 

Other techniques, such as Cell breathing, dynamically 
alter the coverage area based on the load~\cite{Sang2004}. While this provides 
high-capacity to static users, it does not explicitly address the problem of 
eliminating bottleneck across trajectories.

In contrast to the 
above protocol based approaches, we focus on the macro-level problem of 
base-station network planning.
\\\noindent(2) \textit{RF planning:} RF planning techniques optimize the 
transmission power, frequency, load and location of the base-station for 
providing better coverage and/or 
capacity~\cite{Song2010,Peng2011,Amaldi2008,Lifjens2001}.
RF planning configures base-stations to ensure adequate signal-to-noise ratio (SNR)
at the devices, considering the interference from neighboring 
cells~\cite{Glasser2005,Galota2001}.  Some radio planning also considers frequency allocation
and cell coverage optimization to cater to mobile users~\cite{Lifjens2001}.
Most of the RF planning tools, such as Atoll~\cite{atoll}, utilize drive 
tests and carrier wave (CW) measurements to recognize the 
bottlenecks~\cite{Song2010}. Such active measurements inherently 
limit measuring actual user experiences;  they cannot scale to measuring the 
millions of subscriber trajectories. 
Other short-range technologies~\cite{Banerjee2008,smallcell,femtocell} are 
susceptible to frequent handoffs for mobile users. 

In summary, unlike the above studies, we explicitly consider macro-cell 
planning for providing better user-experience \textit{along} users' 
trajectories. We use user's call logs, instead of passive measurements, to 
quantify mobile user experience. 

\section{Conclusions}
\label{sec:concl}

Macro-cell planning for mobile users is a relatively unexplored problem in the 
evolution of heterogeneous cellular network architectures.
This paper addressed the problem of performing macro-cell base-station upgrades 
by accounting for the mobile user satisfaction along their trajectories.

We conducted a measurement-based experiment to show that mobile users suffer 
from degraded quality of experience on their everyday commute trajectories.
Based on our findings, we formulated a generic problem that plans macro-cell 
upgrades to optimize mobile user experience. Our formulation utilizes active 
logs to quantify user experience. In addition, our formulation enables the 
operators to plan the upgrades in a way that is cognizant of their business 
needs and constraints. This is done by allowing the operators to define 
key parameters such as the budget for upgrades, desired satisfaction 
level on the trajectory, and a micro-segment of the mobile users whose 
mobile experience has to be optimized (e.g., high-value 4G customers with long 
commutes).

We proved the NP-hardness of the problem, designed two approximation 
algorithms, and proved their approximation bounds. We designed a 
synthetic city and network trace generator, and showed the dependence 
of planning budget and algorithm effectiveness in various 
population distributions.
Our algorithms consistently enable the operator to 
achieve 3-8 times better user experience than simple location-based 
greedy heuristics for the same budget.
In future, we plan to address problems in ``continuous network planning'' 
systems that ensure quality of experience for mobile users. 

\bibliographystyle{IEEEtran}
\balance
\bibliography{papers}

\begin{thebibliography}{10}
\providecommand{\url}[1]{#1}
\csname url@samestyle\endcsname
\providecommand{\newblock}{\relax}
\providecommand{\bibinfo}[2]{#2}
\providecommand{\BIBentrySTDinterwordspacing}{\spaceskip=0pt\relax}
\providecommand{\BIBentryALTinterwordstretchfactor}{4}
\providecommand{\BIBentryALTinterwordspacing}{\spaceskip=\fontdimen2\font plus
\BIBentryALTinterwordstretchfactor\fontdimen3\font minus
  \fontdimen4\font\relax}
\providecommand{\BIBforeignlanguage}[2]{{%
\expandafter\ifx\csname l@#1\endcsname\relax
\typeout{** WARNING: IEEEtran.bst: No hyphenation pattern has been}%
\typeout{** loaded for the language `#1'. Using the pattern for}%
\typeout{** the default language instead.}%
\else
\language=\csname l@#1\endcsname
\fi
#2}}
\providecommand{\BIBdecl}{\relax}
\BIBdecl

\bibitem{booz}
{Booz\&Co}, ``{Avoiding the Slippery Slope: How GCC Telecom Operators Can
  Improve Profitability?}''
  \url{http://www.strategyand.pwc.com/media/file/Strategyand-GCC-Telecoms-Improve-Profitablity.pdf},
  Oct 2013.

\bibitem{smallcell}
``Small cell,'' \url{http://en.wikipedia.org/wiki/Small_cell}.

\bibitem{femtocell}
V.~Chandrasekhar, J.~Andrews, and A.~Gatherer, ``{Femtocell Networks: A
  Survey},'' in \emph{IEEE Comm Magazine}, vol.~46, Sept. 2008.

\bibitem{wandera}
{Wandera}, ``{Enterprise Mobile Data Report - Q3},''
  \url{http://www.iroam.com/sites/all/themes/elegantica/documents/Enterprise%20Mobile%20Data%20Report%20-%20Q3.pdf},
  Oct 2013.

\bibitem{airtel}
{Bharti Airtel Limited}, ``{Quarterly report, March 31, 2014},'' online, Apr
  2014, available at
  \url{http://www.airtel.in/wps/wcm/connect/e10e645b-d938-4299-9761-4239f8b543e5/Bharti-Airtel-Limited_Quarterly-Report_March-31-2014.pdf?MOD=AJPERES}.

\bibitem{Song2010}
L.~Song and J.~Shen, \emph{Evolved cellular network planning and optimization
  for UMTS and LTE}.\hskip 1em plus 0.5em minus 0.4em\relax CRC Press, 2010.

\bibitem{atoll}
{Forsk}, ``{Atoll: Wireless Network Engineering Software},''
  \url{http://www.forsk.com/atoll/}.

\bibitem{tdrs}
{3GPP TS 25.413}, ``{UTRAN Iu interface RANAP signalling},''
  \url{http://www.3gpp.org/DynaReport/25413.htm}.

\bibitem{dpi}
{Sandvine}, ``{Policy Traffic Switch},''
  \url{https://www.sandvine.com/platform/policy-traffic-switch.html}, July
  2014.

\bibitem{Bourgeois}
M.~Langberg, ``Approximation algorithms for maximization problems arising in
  graph partitioning,'' Ph.D. dissertation, Weizmann Institute of Science,
  1998.

\bibitem{ArtOfProgrammingVol3}
D.~E. Knuth, \emph{Art of Computer Programming, Volume 3: Sorting and Searching
  (2nd Edition)}, 2nd~ed., 1998.

\bibitem{FullTUMP}
``Trajectory aware macro-cell planning for mobile users: Extended paper,''
  \url{https://dl.dropboxusercontent.com/u/73236979/FullTUMP.pdf}.

\bibitem{FibonacciHeap}
\BIBentryALTinterwordspacing
M.~L. Fredman and R.~E. Tarjan, ``Fibonacci heaps and their uses in improved
  network optimization algorithms,'' \emph{J. ACM}, vol.~34, no.~3, pp.
  596--615, Jul. 1987. [Online]. Available:
  \url{http://doi.acm.org/10.1145/28869.28874}
\BIBentrySTDinterwordspacing

\bibitem{realityMining}
N.~Eagle, A.~S. Pentland, and D.~Lazer, ``Inferring friendship network
  structure by using mobile phone data,'' \emph{PNAS}, vol. 106, no.~36, 2009.

\bibitem{beyondEdgeless}
T.~S. Robert E.~Lang and J.~LeFurgy, ``Beyond edge city: Office geography in
  the new metropolis,'' National Center for Real Estate Research, Tech. Rep.,
  Feb 2006.

\bibitem{osm}
``Open street map,'' \url{http://www.openstreetmap.org/}.

\bibitem{Juang2005}
R.~T. Juang, H.-P. Lin, and D.-B. Lin, ``{An improved location-based handover
  algorithm for GSM systems},'' in \emph{Wireless Communications and Networking
  Conference, 2005 IEEE}, vol.~3, 2005, pp. 1371--1376.

\bibitem{lprelax}
{Wikipedia}, ``{Linear programming relaxation},''
  \url{http://en.wikipedia.org/wiki/Linear_programming_relaxation}, July 2014.

\bibitem{Sgora2009}
A.~Sgora and D.~Vergados, ``Handoff prioritization and decision schemes in
  wireless cellular networks: a survey,'' \emph{IEEE Communications Surveys
  Tutorials}, vol.~11, no.~4, pp. 57--77, 2009.

\bibitem{Sang2004}
A.~Sang, X.~Wang, M.~Madihian, and R.~D. Gitlin, ``Coordinated load balancing,
  handoff/cell-site selection, and scheduling in multi-cell packet data
  systems,'' in \emph{MobiCom '04}, 2004, pp. 302--314.

\bibitem{Peng2011}
C.~Peng, S.-B. Lee, S.~Lu, H.~Luo, and H.~Li, ``Traffic-driven power saving in
  operational 3g cellular networks,'' in \emph{MobiCom '11}.\hskip 1em plus
  0.5em minus 0.4em\relax ACM, 2011, pp. 121--132.

\bibitem{Amaldi2008}
E.~Amaldi, A.~Capone, and F.~Malucelli, ``{Radio planning and coverage
  optimization of 3G cellular networks},'' \emph{Wirel. Netw.}, vol.~14, no.~4,
  pp. 435--447, 2008.

\bibitem{Lifjens2001}
R.~Lifjens, ``{The impact of mobility on UMTS network planning},'' in
  \emph{IEEE VTC 2001 Spring}, vol.~4, 2001, pp. 2539--2543 vol.4.

\bibitem{Glasser2005}
C.~Glasser, S.~Reith, and H.~Vollmer, ``The complexity of base station
  positioning in cellular networks,'' \emph{Discrete Appl. Math.}, vol. 148,
  no.~1, pp. 1--12, Apr. 2005.

\bibitem{Galota2001}
M.~Galota, C.~Glasser, S.~Reith, and H.~Vollmer, ``A polynomial-time
  approximation scheme for base station positioning in umts networks,'' in
  \emph{DIALM Workshop}, 2001, pp. 52--59.

\bibitem{Banerjee2008}
N.~Banerjee, M.~D. Corner, D.~Towsley, and B.~N. Levine, ``Relays, base
  stations, and meshes: Enhancing mobile networks with infrastructure,'' in
  \emph{MobiCom '08}.\hskip 1em plus 0.5em minus 0.4em\relax ACM, 2008, pp.
  81--91.

\end{thebibliography}

\end{document}